\documentclass[pra,
twocolumn,
superscriptaddress,nofootinbib,longbibliography]{revtex4-2}

\usepackage[pdftex]{graphicx}
\usepackage{amsmath,amsfonts,amsbsy,amssymb,amsthm,mathtools}
\usepackage{mathrsfs}
\usepackage{epstopdf}
\usepackage{ulem,color}
\usepackage{enumitem}
\usepackage{siunitx}
\usepackage{amssymb,amsmath}
\usepackage{ulem,color}
\usepackage{MnSymbol}
\usepackage{environ}
\usepackage{xcolor}

\usepackage{tikz}
\usetikzlibrary{quantikz}
\usetikzlibrary{automata,arrows,positioning,calc}

\usepackage{pgfplots}
\usepackage{pstool}
\usepackage{tikzsymbols}
\usetikzlibrary{plotmarks,spy}

\newtheorem{corollary}{Corollary}
\newtheorem{theorem}{Theorem}

\theoremstyle{definition}
\newtheorem{definition}{Definition}

\usepackage{hyperref}
\usepackage[T1]{fontenc}
\usepackage{mwe}
\usepackage{tikz}
\usetikzlibrary{quantikz}
\newcommand{\ketbra}[1]{|{#1}\>\mkern-4mu\<{#1}|}
\newcommand{\tr}{\textup{Tr}}
\renewcommand{\>}{\rangle}
\newcommand{\<}{\langle}
\newcommand{\C}{{\mathbb{C}}} %
\newcommand{\R}{{\mathbb{R}}} %

\renewcommand{\ll}{\llangle}
\newcommand{\rr}{\rrangle}
\newcommand{\spn}{\textrm{Span}}

\newcommand{\CQT}{Centre for Quantum Technologies, National University of Singapore, 3 Science Drive 2, Singapore 117543.\looseness=-1}
\newcommand{\NTU}{Nanyang Quantum Hub, School of Physical and Mathematical Sciences, Nanyang Technological University, Singapore 639673.\looseness=-1}
\newcommand{\IHPC}{A*STAR Quantum Innovation Centre (Q.InC), Institute of High Performance Computing (IHPC), Agency for Science, Technology and Research (A*STAR), 1 Fusionopolis Way, \#16-16 Connexis, Singapore, 138632, Republic of Singapore.\looseness=-1}

\newcommand{\CQuERE}{Centre for Quantum Engineering, Research and Education, TCG CREST, Sector V, Salt Lake, Kolkata 700091, India.\looseness=-1}

\begin{document}

\normalem
\newlength\figHeight 
\newlength\figWidth

\title{Strategic Code: A Unified Spatio-Temporal Framework for Quantum Error-Correction}

\author{Andrew Tanggara}
\email{andrew.tanggara@gmail.com}
\affiliation{\CQT}
\affiliation{\NTU}

\author{Mile Gu}
\email{mgu@quantumcomplexity.org}
\affiliation{\NTU}
\affiliation{\CQT}

\author{Kishor Bharti}
\email{kishor.bharti1@gmail.com}
\affiliation{\IHPC}
\affiliation{\CQuERE}

\date{\today}

\begin{abstract}
Quantum error-correcting code (QECC) is the central ingredient in fault-tolerant quantum information processing.
An emerging paradigm of dynamical QECC shows that one can robustly encode logical quantum information both temporally and spatially in a more resource-efficient manner than traditional QECCs.
Nevertheless, an overarching theory of how dynamical QECCs achieve fault-tolerance is lacking.
In this work, we bridge this gap by proposing a unified spatio-temporal QECC framework called the ``strategic code'' built around an ``interrogator'' device which sequentially measures and evolves the spatial QECC in an adaptive manner based on the ``quantum combs'' formalism, a generalization of the channel-state duality. 
The strategic code covers all existing dynamical and static QECC, as well as all physically plausible QECCs to be discovered in the future, including those that involve adaptivity in its operational dynamics.
Within this framework, we show an algebraic and an information-theoretic necessary and sufficient error-correction conditions for a strategic code, which consider spatially and temporally correlated errors.
These conditions include the analogous known static QECC conditions as a special case.
Lastly, we also propose an optimization-theoretic approach to obtain an approximate strategic code adapting to a correlated error.

\end{abstract}

\maketitle

The susceptibility of quantum systems to noise has been a major obstacle in achieving the advantages offered by quantum information processing tasks over their classical counterparts.
A quantum error-correcting code (QECC) overcomes this problem by redundantly encoding quantum information in a noise-robust manner.
However conventional QECCs involve operations on many-body quantum system to encode and decode logical information, which are notoriously resource-intensive.
The novel paradigm of dynamical QECC offers a promising solution by utilizing the temporal dimension to encode and decode logical information, thus easing this demanding requirement.
Many dynamical QECC has been proposed, with the most popular being the Floquet codes~\cite{hastings2021dynamically,gidney2021fault,vuillot2021planar,haah2022boundaries,gidney2022benchmarking,davydova2023floquet,wootton2022measurements,higgott2023constructions,zhang2023x,aasen2023measurement,paetznick2023performance,fahimniya2023hyperbolic,dua2024engineering} which measurement sequence is performed periodically, although other non-periodic codes have also been explored~\cite{bacon2017sparse,gottesman2022opportunities,delfosse2023spacetime,mcewen2023relaxing,berthusen2023partial,davydova2023quantum,fu2024error,bombin2023unifying,kesselring2024anyon}.

Despite the remarkable progress in dynamical QECC research, an overarching theory of error-correction for QECCs considering both spatial and temporal encoding that is analogous to its static counterpart~\cite{knill1997theory,schumacher1996quantum,nielsen1998information,cerf1997information,nielsen2007algebraic,gottesman1997stabilizer,kribs2005unified}, has been largely unexplored.
In order to analyze the error-correction capability of a QECC with respect to resources that it uses, both spatially and temporally, such theory is imperative.
In this work we bridge this gap by proposing a QECC framework called the ``strategic code'' which unifies all existing dynamical and static QECCs, as well as all physically plausible QECCs to be discovered.
The novelty of the strategic code lies in the ``interrogator'' device which captures any set of operations performed both spatially and temporally between the encoding and decoding stage, completing the conceptual gap operationally between static and dynamical QECC paradigms.
The strategic code framework also generalize existing QECC paradigms by accommodating temporal operational adaptivity of the code and  the effect of the most general class of noise with both spatial and temporal (non-Markovian) correlations~\cite{white2020demonstration,milz2021quantum,rivas2014quantum,li2018concepts,sakuldee2018non,aharonov2006fault,nickerson2019analysing,ng2009fault,preskill2012sufficient}.

Within the strategic code framework for QECCs with an interrogator that maintains a classical memory and for general error models (which may exhibit correlations), we show necessary and sufficient error-correction conditions, as well as formulate a multi-convex optimization problem to obtain an approximate code.
Our conditions are presented in two equivalent forms: algebraically (Theorem~\ref{thm:algebraic_KL_condition}) and an information-theoretically (Theorem~\ref{thm:info_theoretic_condition}).
Due the generality of our framework, algebraic and information-theoretic necessary and sufficient conditions for static QECC~\cite{knill1997theory,nielsen1998information,schumacher1996quantum,cerf1997information} are included as special cases. 
Since strategic code subsumes notable QECC frameworks, such as the sequential Pauli measurements framework~\cite{fu2024error}, ZX calculus framework~\cite{bombin2023unifying}, and anyon condensation framework~\cite{kesselring2024anyon}, these conditions serve as a guide in a code construction within these frameworks.

\section{General QECC Scenario}\label{sec:general_QECC}

\begin{figure*}
    \centering
    \
    \includegraphics{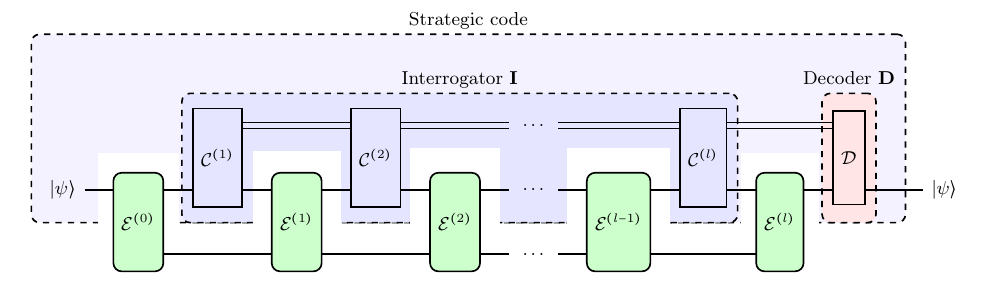}
    \caption{
    General QECC Scenario.
    At round $0$, error $\mathcal{E}^{(0)}$ is inflicted to code state $|\psi\>$ that belongs to the initial code space $\mathscr{S}_{Q_0}$, evolving it to error-inflicted subspace $\mathscr{S}_{Q_0'}$.
    Then an interrogator device $\mathbf{I}$ is applied to the code, performing quantum operations on the code $l$ times in sequence.
    At each round $r\in\{1,\dots,l\}$, evolution on the code is performed by a quantum operation $\mathcal{C}^{(r)}$, giving a new code space $\mathscr{S}_{Q_r}$ followed by error map $\mathcal{E}^{(r)}$ further evolving the code space $\mathscr{S}_{Q_r}$ to error-inflicted subspace $\mathscr{S}_{Q_r'}$.
    The interrogator maintains a classical memory (illustrated by double-wires) which keeps information about previous events (e.g. measurement outcome) and determines the choice of quantum operation performed at each round.
    At the start of round $r$, if the interrogator's  memory state is $m_{r-1}$ then it performs quantum operation $\mathcal{C}_{m_{r-1}}^{(r)}$.
    After performing the operation, it the updates the state of the classical memory register $m_{r-1}\mapsto m_r$ where $m_r$ is the updated memory state that determines the subsequent operations.
    After the final round $l$, decoding $\mathbf{D}$ is performed based on the final memory state $m_l$, where channel $\mathcal{D}_{m_l}$ is applied to the code system to restore the initial code state $|\psi\>$.}
    \label{fig:dynamical_QECC}
\end{figure*}

Error-correcting code scenario can always be described as interactions between a code and some noise changing the state of a physical system where logical information is encoded in.
Conventionally, the essential two ingredients of a code are: (1) an operation encoding the logical information into a physical system and (2) an operation that recovers logical information from any error caused by noise that occurs in between.
In a quantum error-correcting code (QECC) scenario, further interplay between the code and noise can take place between encoding and decoding, allowing more interesting implications on the error-correction process due to inherent quantum-mechanical effects.

In general, a QECC scenario consists of three stages:
The first stage being the encoding stage, the final stage being the decoding stage, and between them, any set of operations performed by the code, interacting with the noise (see Fig.~\ref{fig:dynamical_QECC}).
We simplify this by starting with an initial codespace $\mathscr{S}_{Q_0}$ instead of an encoding map (as an encoding map uniquely defines the initial codespace).
Then a set of operations $\mathbf{I}$, called the \textit{interrogator}, is performed by the code in between encoding and decoding.
The interrogator performs $l\geq0$ rounds of \textit{check operations} where the check operation $\mathcal{C}^{(r)}$ at round $r$ may be chosen (from a set of allowed operations) \textit{adaptively} based on events happened in the previous rounds stored in a classical memory register.
When the classical memory register is in a \textit{memory state} $m_{r-1}$, then a check operation $\mathcal{C}_{m_{r-1}}^{(r)}$ is performed.
Between encoding and decoding, we also have errors $\mathbf{E}$ being inflicted on the code right after the encoding stage and after each round of operation.
Errors can generally have spatial correlations within each round and temporal (non-Markovian) correlations across rounds.
Lastly in the decoding stage, the decoding procedure $\mathbf{D}$ consists of multiple decoders that the coder can choose from based on the information stored in the classical memory $m_l$ after the last round of operation.
A successful QECC procedure recovers logical information encoded in the initial codespace as codestate $|\psi\>$.

Although here we focus on an interrogator where only classical memory storage is allowed, one can easily generalize this to an interrogator where quantum memory is involved.
Such interrogator could model a scenario where a small-size quantum system can be used reliably to store some information about the code across time alongside classical memory to be used in the decoding stage.
Particularly, this generalization reduces to the entanglement-assisted QECC (EAQECC)~\cite{brun2006correcting,hsieh2007general,brun2014catalytic} when we set the number of rounds $l=0$, see Appendix~\ref{app:interrogator_quantum_memory} for details.

More formally, the entirety of an $l$-round QECC scenario is defined by the following objects:
\begin{enumerate}
    \item A sequence of codespaces $\mathscr{S}_{Q_0},\mathscr{S}_{Q_0'},\dots,\mathscr{S}_{Q_l},\mathscr{S}_{Q_l'}$ which are a subspace of $d$ dimensional complex vector space $\C^d$ and spaces of bounded linear operators from $\C^d$ to $\C^d$, $\mathscr{H}_{Q_0},\mathscr{H}_{Q_0'},\dots,\mathscr{H}_{Q_l},\mathscr{H}_{Q_l'},\mathscr{H}_D$. 
    The codespaces and operator spaces depends on operations performed in the two stages defined below, while the initial codespace $\mathscr{S}_{Q_0}$ is defined independent of them.

    \item An interrogator $\mathbf{I}$ consists of a sequence of \textit{check instruments} $\mathcal{C}^{(1)},\, \mathcal{C}^{(2)}:=\{\mathcal{C}_{m_1}^{(2)}\}_{m_1},\, \dots,\, \mathcal{C}^{(l)}:=\{\mathcal{C}_{m_{l-1}}^{(l)}\}_{m_{l-1}}$ 
    where $\mathcal{C}_{m_{r-1}}^{(r)} := \{\mathcal{C}_{m_r|m_{r-1}}^{(r)}\}_{m_r}$ and $\mathcal{C}^{(1)} := \{\mathcal{C}_{m_1}^{(1)}\}_{m_1}$ are a check instrument for round $r>1$ and round $r=1$, respectively.
    Each $\mathcal{C}_{m_r|m_{r-1}}^{(r)}:\mathscr{H}_{Q_{r-1}'}\rightarrow\mathscr{H}_{Q_r}$ is a completely-positive (CP) map such that $\sum_{m_r} \mathcal{C}_{m_r|m_{r-1}}^{(r)}$ is trace preserving (TP).
    Check instrument $\mathcal{C}_{m_{r-1}}^{(r)}$ may perform a deterministic operation on the code (e.g. a unitary) in which $\mathcal{C}_{m_{r-1}}^{(r)}$ consist of only one element, or a probabilistic operation where each of its element maps an initial codestate to a post-measurement codestate.

    \item Decoder $\mathbf{D}=\{\mathcal{D}_{m_l}\}_{m_l}$ where \textit{decoding channel} $\mathcal{D}_{m_l}:\mathscr{H}_{Q_l'} \rightarrow \mathscr{H}_D$ is a CPTP map that given classical information $m_l$ in the classical memory register recovers the initial code state $|\psi\>_{Q_0}$ (what this precisely means will be defined shortly in Definition~\ref{def:dynamical_QECC}).

    \item Error $\mathbf{E}$ consists of a sequence of \textit{error maps} $\mathcal{E}^{(0)},\mathcal{E}^{(1)},\dots,\mathcal{E}^{(l)}$ where $\mathcal{E}^{(r)}:\mathscr{H}_{Q_r}\otimes\mathscr{H}_{E_{r-1}}\rightarrow\mathscr{H}_{Q_r'}\otimes\mathscr{H}_{E_r}$ is a CP trace non-increasing map defined by $\mathcal{E}^{(r)}(\rho)=\sum_{e_r}E_{e_r} \rho E_{e_r}^\dag$ where bounded linear operator $E_{e_r}:\mathscr{S}_{Q_r}\otimes\mathscr{S}_{E_{r-1}}\rightarrow\mathscr{S}_{Q_r'}\otimes\mathscr{S}_{E_r}$ being a Kraus operator for $\mathcal{E}^{(r)}$ for $r\geq1$ and $E^{(0)}:\mathscr{S}_{Q_0}\rightarrow\mathscr{S}_{Q_0'}\otimes\mathscr{S}_{E_0}$.
    Spaces labeled by $E_{r-1}$ and $E_r$ are the systems storing any temporal correlations in the noise environment from round $r-1$ to round $r$ and from round $r$ to round $r+1$, respectively.
    In the case of uncorrelated error sequence we simply have $\mathcal{E}^{(r)}:\mathscr{H}_{Q_r}\rightarrow\mathscr{H}_{Q_r'}$, with Kraus operators of the form $E_{e_r}:\mathscr{S}_{Q_r} \rightarrow\mathscr{S}_{Q_r'}$.
\end{enumerate}

We remark that in the description above it is assumed that the dimension $d$ of the quantum system where the code lives is always the same at all rounds.
However, this assumption is only for convenience and is not assumed in our proofs thus can be relaxed to round-dependent dimensions.
Namely, code spaces $\mathscr{S}_{Q_r},\mathscr{S}_{Q_r'}$ being a subspace of $\C^{d_r}$ and $\mathscr{H}_{Q_r},\mathscr{H}_{Q_r'}$ being spaces of bounded linear operators from $\C^{d_r}$ to $\C^{d_r}$.

Completely positive map $\mathcal{C}_{m_r|m_{r-1}}^{(r)}$ corresponds to mapping multiple measurement outcomes arising from measurement setting defined by memory state $m_{r-1}$. 
This can be formally described as POVM $\{M_{o_r|m_{r-1}}\}_{o_r}$ defined by the Kraus operators $C_{o_r|m_{r-1}}^{(r)}$ of CPTP map $\mathcal{C}_{m_{r-1}}^{(r)} = \sum_{m_r} \mathcal{C}_{m_r|m_{r-1}}^{(r)}$ defined by the instrument $\{\mathcal{C}_{m_r|m_{r-1}}^{(r)}\}_{m_r}$.
Namely, $M_{o_r|m_{r-1}} = C_{o_r|m_{r-1}}^{(r)\dag} C_{o_r|m_{r-1}}^{(r)}$ and $\mathcal{C}_{m_r|m_{r-1}}^{(r)}(\rho) = \sum_{o_r} C_{o_r|m_{r-1}}^{(r)} \rho C_{o_r|m_{r-1}}^{(r)\dag}$.
The memory state $m_r$ is defined by some \textit{memory update function} $f_r$ that maps the measurement outcomes $o_r$ and memory state $m_{r-1}$ to memory state $m_r$, namely $\mathcal{C}_{m_{r-1}}^{(r)} = \sum_{m_r} \sum_{o_r:f_r(o_r,m_{r-1})=m_r} \mathcal{C}_{o_r|m_{r-1}}^{(r)}$ where $\mathcal{C}_{o_r|m_{r-1}}^{(r)}(\rho) = C_{o_r|m_{r-1}}^{(r)} \rho C_{o_r|m_{r-1}}^{(r)\dag}$.
Thus, $\mathcal{C}_{m_r|m_{r-1}} = \sum_{o_r:f_r(o_r,m_{r-1})=m_r} \mathcal{C}_{o_r|m_{r-1}}^{(r)}$.
Note that here we consider time dependent memory update functions $f_1,\dots,f_l$, for full generality, but of course one can instead consider a time-independent function $f$ used in every round. 
Both measurement outcome $o_r$ and measurement setting $m_{r-1}$ should give spatial and temporal information about error occurrence.
The memory state $m_{r-1}$ at the start of round $r\in\{2,\dots,l\}$ determines the choice of instrument $\mathcal{C}_{m_{r-1}}^{(r)}$ applied in that round, whereas memory state $m_l$ at final round $l$ is used to choose which decoder $\{\mathcal{D}_{m_l}\}_{m_l}$ is used to recover the initial codestate.
Each final memory state $m_l$ corresponds to a unique set $O_{m_l}$ (defined by memory update functions $f_1,\dots,f_l$) containing all check measurement outcome sequence $o=o_1,o_2,\dots,o_l$ where there exists a sequence of memory states $m_1,\dots,m_{l-1}$ such that $m_l=f_l(o_l,m_{l-1})$ and $m_{l-1}=f_{l-1}(o_{l-1},m_{l-2})$, ..., $m_1=f_1(o_1)$.
Hence we can define a function $f^*$ as $f_l^*(o) = f_l(o_l,f_{l-1}(o_{l-1},\dots f(o_1)\dots))$ which maps an outcome sequence $o$ to a final memory state.
So we can express in symbols, $O_{m_l} = \{o : f_l^*(o)=m_l\}$. 

Note that codespace $\mathscr{S}_{Q_r}$ at round $r$ is determined by the choice of check instrument $\mathcal{C}_{m_{r-1}}^{(r)}$ and the outcome $o_r$ from the measurement defined by it.
Namely, $\mathcal{C}_{o_r|m_{r-1}}^{(r)}$ maps the state of the codespace $\mathscr{S}_{Q_{r-1}'}$ (the codespace after error map $\mathcal{E}^{(r-1)}$ at the previous round) to codespace $\mathscr{S}_{Q_r}$.
Also, both codespaces $\mathscr{S}_{Q_r}$ and $\mathscr{S}_{Q_{r-1}'}$ may depend on the check instruments and error maps in the previous rounds.\footnote{For Floquet codes this corresponds to the instantaneous stabilizer groups, which is a stabilizer group at a particular round $r$ defined by the check measurement outcome at that round and as well as the outcomes and errors occurring up to that round.}

Interrogator $\mathbf{I}$, error $\mathbf{E}$, and decoder $\mathbf{D}$ in a QECC scenario can be represented by the quantum combs formalism~\cite{chiribella2009theoretical,chiribella2008quantum,milz2021quantum,giarmatzi2021witnessing,pollock2018non,gutoski2007toward,oreshkov2012quantum}\footnote{Also known as process tensor in~\cite{pollock2018non} or quantum strategy in~\cite{gutoski2007toward}, and more generally, process matrix in~\cite{oreshkov2012quantum} where exotic causal ordering can be exhibited.}.
It has the advantage of compactly representing temporally-correlated sequence of dynamics on a quantum system as a positive semidefinite operator by generalizing the Choi-Jamio{\l}kowski isomorphism~\cite{choi1975completely,jamiolkowski1972linear,jiang2013channel} which holds the complete information about the temporal dynamics.
Quantum combs has found many applications such as quantum causal inference~\cite{bai2020efficient,costa2016quantum}, metrology~\cite{chiribella2012optimal,liu2023optimal}, interactive quantum games~\cite{gutoski2007toward}, open quantum systems~\cite{luchnikov2019simulation}, quantum cryptography~\cite{gutoski2018fidelity}, and quantum communication~\cite{kristjansson2020resource}.
However, to our knowledge no application of quantum combs has been made in the context of QECC before.
In the following we describe the quantum combs representation for the QECC scenario.
More technical details of the quantum combs representation can be found in Appendix~\ref{app:choi_jamiolkowski_isomorphism_link_product}.

\subsection{Strategic code and quantum combs representation}

In an $l$-round QECC scenario, the entirety of how logical information is preserved can be described by the initial codespace $\mathscr{S}_{Q_0}$ and the sequence of check instruments $\mathcal{C}^{(1)},\mathcal{C}^{(2)},\dots,\mathcal{C}^{(l)}$.
These two objects made up the strategic code, defined formally in the following.

\begin{definition}
    An $l$-round \textit{strategic code} is defined by a tuple $(\mathscr{S}_{Q_0},\mathbf{I})$ where $\mathscr{S}_{Q_0}$ is the \textit{initial codespace} which is a subspace of $\C^d$, and $\mathbf{I} = \{\mathbf{I}_{m_l}\}_{m_l}$ is a collection of positive semidefinite operators in $\mathscr{H}_{Q_0'} \otimes ( \bigotimes_{r=1}^l \mathscr{H}_{Q_{r-1}'}\otimes\mathscr{H}_{Q_r} )$ called the \textit{interrogator}.
    Interrogator $\mathbf{I}$ describes all possible sequences of check instruments $\{\mathcal{C}^{(1)},\mathcal{C}_{m_1}^{(2)},\dots,\mathcal{C}_{m_{l-1}}^{(l)}\}_{m_1,\dots,m_{l-1}}$ along their with temporal dependence defined by functions $f_1,\dots,f_l$.
    An element $\mathbf{I}_{m_l}$ of an interrogator is called an \textit{interrogator operator}, which is a quantum comb describing the sequences of CP maps $\mathcal{C}_{m_1}^{(1)},\mathcal{C}_{m_2|m_1}^{(2)},\dots,\mathcal{C}_{m_l|m_{l-1}}^{(l)}$ that ends in final memory state $m_l$.
    An interrogator operator $\mathbf{I}_{m_l}$ takes the form of 
    \begin{equation}\label{eqn:eigenvector_interrogator}
        \mathbf{I}_{m_l} = \sum_{o\in O_{m_l}} |C_{m_l,o}\rr\ll C_{m_l,o}|
    \end{equation}
    where $O_{m_l}$ is the set of check measurement outcome sequences $o=o_1,\dots,o_l$ resulting in final memory state $m_l$.
    Here $|C_{m_l,o}\rr = |C_{o_l|m_{l-1}}^{(l)}\rr\otimes\dots\otimes |C_{o_1}^{(1)}\rr$ (with $m_r = f_r(o_r,m_{r-1})$) is an eigenvector of interrogator operator $\mathbf{I}_{m_l}$, and $|C_{o_r|m_{r-1}}^{(r)}\rr$ is the vectorized representation of the Kraus operator $C_{o_r|m_{r-1}}^{(r)}$.
\end{definition}

Note that $\mathscr{H}_{Q_0'} \otimes ( \bigotimes_{r=1}^l \mathscr{H}_{Q_{r-1}'}\otimes\mathscr{H}_{Q_r} )$ is the tensor product of bounded linear operator space of the inputs and outputs of the sequence of CP maps of the check instruments.
An interrogator describes all possible ``trajectories'' of how the code evolves according to the sequence of check measurement outcomes and operations performed based on previously obtained outcomes.
As an example, if the initial codestate is $\rho^{Q_0}$ (round $r=0$) and in round $r=1$ a measurement is performed, an outcome $0$ will result in post-measurement codestate $\rho_0^{Q_1}$ while an outcome $m_1\neq0$ codestate $\rho_{m_1}^{Q_1}$ which is generally is not equal to $\rho_0^{Q_1}$.
Outcome $m_1$ also determines which operation is performed in the next round ($r=2$).
So if a measurement at $r=2$ depending on outcome $m_1$ gives an outcome $m_2$, then we obtain codestate $\rho_{m_2|m_1}^{Q_2}$, where the label $m_2|m_1$ describes the ``trajectory''.
If decoding is performed on $\rho_{m_2|m_1}^{Q_2}$, then the decoder will use $m_2$ to recover the initial state.
For a diagram illustrating the trajectories of the code induced by the operations performed by the interrogator, see Fig.~\ref{fig:interrogator}.

\begin{figure}
\centering
\includegraphics[width=1\columnwidth]{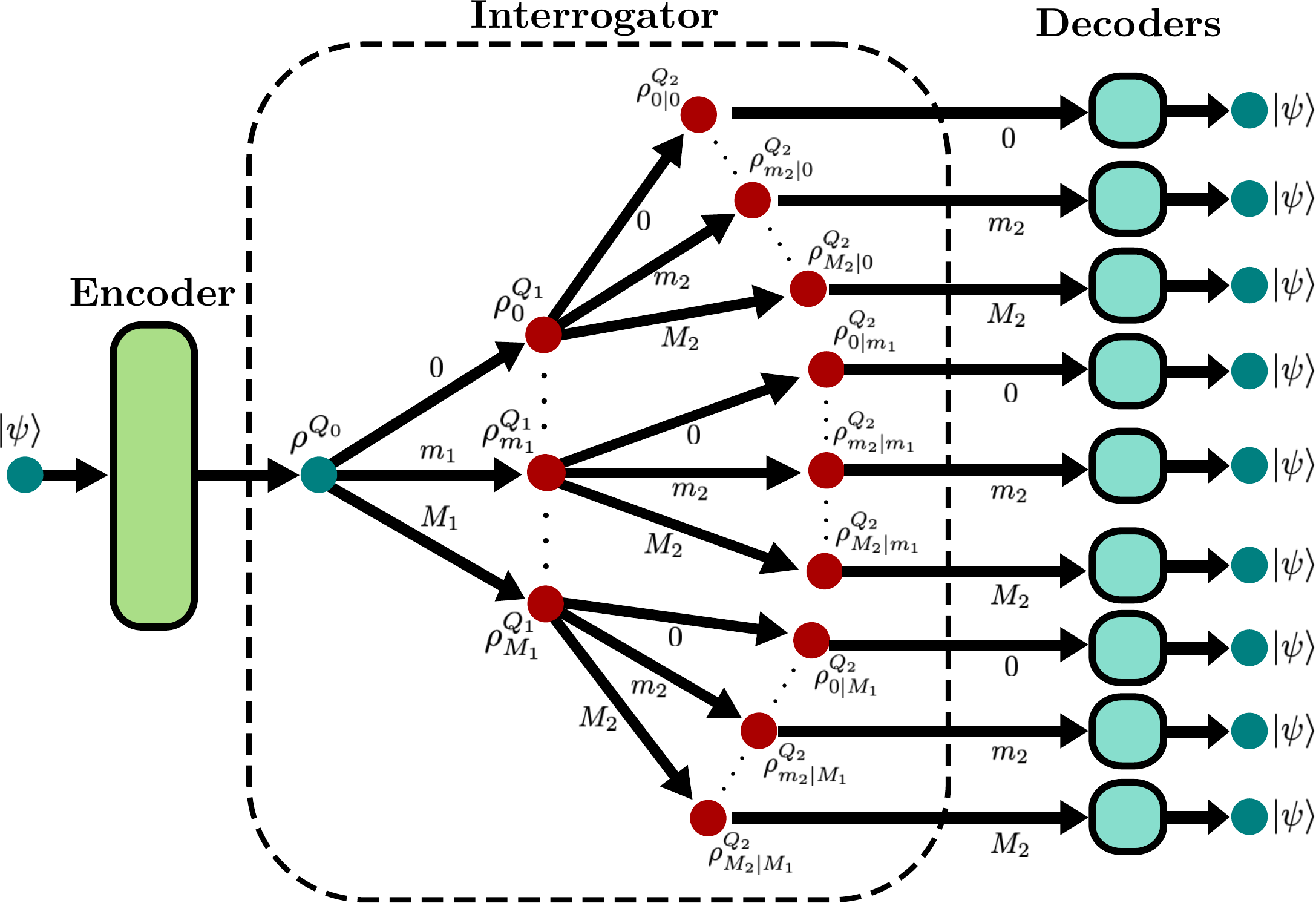}
\caption{Trajectories of the code evolution in an interrogator.}
\label{fig:interrogator}
\end{figure}

Now we turn to how decoding and errors are described in an $l$-round QECC scenario.
Decoding channel $\mathcal{D}_{m_l}$ is described by each of its Choi-Jamiolkowski representation $\mathbf{D}_{m_l}$, which is a positive semidefinite operator in $\mathscr{H}_{Q_l'}\otimes\mathscr{H}_D$.
Similarly, we can also represent the sequence of error CP maps $\mathcal{E}^{(0)},\dots,\mathcal{E}^{(l)}$ as positive semidefinite operator
\begin{equation}
    \mathbf{E} = \sum_e |E_e\rr\ll E_e|
\end{equation}
where  $e=e_0,e_1,\dots,e_l$ indicates an error sequence.
Here $|E_e\rr$ is the vectorized representation of the sequence of Kraus operators $E_{e_0},\dots,E_{e_l}$ of the error maps.

The entire interaction between an interrogator operator ending in final memory state $m_l$ and the error maps can be described as $\mathbf{E} \ast \mathbf{I}_{m_l}$ where "$\ast$" is an associative dyadic operation between two linear operators $A\in\mathscr{H}_A\otimes\mathscr{H}_C$ and $B\in\mathscr{H}_C\otimes\mathscr{H}_B$ known as the \textit{link product}~\cite{chiribella2009theoretical}, which can be thought of as a generalization of the Hilbert-Schmidt inner product.
The link product is defined as $A\ast B := \tr_C(( A^{\top_C}\otimes I_B)(I_A\otimes B))$ where $\tr_C$ is partial trace over operator space $\mathscr{H}_C$, $\cdot^{\top_C}$ is the partial transpose over $\mathscr{H}_C$, and $I_A,I_B$ are identity operators of $\mathscr{H}_A$ and $\mathscr{H}_B$, respectively.
Note that if the operator space $\mathscr{H}_C$ has trivial dimension then $A\ast B = A\otimes B$, and if $\mathscr{H}_A,\mathscr{H}_B$ have a trivial dimension then $A\ast B = \tr(A^\top B)$.

Some of our results are expressed in term of the explicit error sequence $e$ described by $|E_e\rr\ll E_e|$ instead of $\mathbf{E}$.
Hence at times we will describe the sequence of error maps as the collection of the rank-1 operators describing the error sequences, $\mathfrak{E}:=\{|E_e\rr\ll E_e|\}_e$.
In both the link product form and the rank-1 vector form, we can express the complete interaction between the an interrogator operator and an error sequence $e$ as
\begin{equation}\label{eqn:error_check_kraus_repr}
\begin{aligned}
    \mathbf{E}\ast\mathbf{I}_{m_l} &= \sum_e |E_e\rr\ll E_e| \ast \mathbf{I}_{m_l} \\
    &= \sum_e \sum_{o\in O_{m_l}} |K_{e,m_l,o}\rr\ll K_{e,m_l,o}| \;,
\end{aligned}
\end{equation}
where $|K_{e,m_l,o}\rr$ is the vectorized form of Kraus operator $K_{e,m_l,o}$ defined by the Kraus operators of the check instruments and error maps $E_{e_l},C_{o_l|m_{l-1}}^{(l)},\dots,C_{o_1}^{(1)},E_{e_0}$ (where the check instrument $\mathcal{C}_{o_r|m_{r-1}}^{(r)}$ in round $r$ implicitly performs an identity map on the environment $\mathscr{H}_{E_{r-1}}$, for explicit definition see Appendix~\ref{app:choi_jamiolkowski_isomorphism_link_product}).

\subsection{Interrogator in existing codes}

A sequence of quantum operations that temporally evolve the spatial encoding of logical information while also extracting error syndromes is the central ingredient in dynamical QECCs such as spacetime codes~\cite{bacon2017sparse,gottesman2022opportunities,delfosse2023spacetime}, Floquet codes~\cite{hastings2021dynamically,gidney2021fault,vuillot2021planar,haah2022boundaries,gidney2022benchmarking,davydova2023floquet,wootton2022measurements,higgott2023constructions,zhang2023x,aasen2023measurement,paetznick2023performance,fahimniya2023hyperbolic,dua2024engineering}, and dynamical code~\cite{fu2024error}. 
The notion of interrogator captures 
This sequence of operations in these codes can be formulated as an interrogator.

In an $n$-qubit spacetime code, generally a round of operation performed by the interrogator consists of Clifford gates and Pauli measurements on disjoint subsets of the $n$-qubits.
Since the operations are not adapting to events in preceding rounds (measurements, etc.), here the interrogator simply stores all measurement outcomes $o$ to be used at the decoding round.
For more detailed discussions on interrogator formulation of spacetime code see Appendix~\ref{app:spacetime_code}.

On the other hand in a Floquet code and a dynamical code, generally a round of operation by the interrogator consists of commuting Pauli measurements.
Similar to the spacetime code, the sequence of operations performed on the code is predetermined and the interrogator simply stores all measurement outcomes $o$ to be used at the decoding round.
In Floquet code, however, Pauli measurements are performed in cycles.
For example in the Hastings-Haah honeycomb code~\cite{hastings2021dynamically,gidney2021fault}, one performs a 3-cycle of Pauli $X$, followed by Pauli $Y$, then Pauli $Z$, which in principle can be performed cyclically indefinitely.
However in the finite time window where measurement syndromes can be revealed from the measurements, error-correction analysis using an interrogator consisting of measurements in this time window can be useful.
For more detailed discussion on the Hastings-Haah honeycomb code in the interrogator form, see Appendix~\ref{app:hastings_haah_interrogator}.

\section{General Error-Correction Conditions}

We now formalize the notion of correctability of an error $\mathbf{E}$ by a strategic code $(\mathscr{S}_{Q_0},\mathbf{I}=\{\mathbf{I}_{m_l}\}_{m_l})$.

\begin{definition}\label{def:dynamical_QECC}
    Strategic code $(\mathscr{S}_{Q_0},\mathbf{I})$ corrects error $\mathbf{E}$ if there exists decoding channels $\{\mathcal{D}_{m_l}\}_{m_l}$ such that for all $|\psi\>\in\mathscr{S}_{Q_0}$,
    \begin{equation}\label{eqn:dynamical_QECC}
    \begin{aligned}
        \mathcal{D}_{m_l}(\mathbf{E} \ast \mathbf{I}_{m_l} \ast \ketbra{\psi}) = \lambda_{m_l} \ketbra{\psi}
    \end{aligned}
    \end{equation}
    for some constant $\lambda_{m_l}\in\R$.
\end{definition}

We note that how recovery of the initial codestate $|\psi\>$ is being defined to be up to a constant independent of $|\psi\>$ is an artifact from the fact that the error $\mathbf{E}$ and the interrogator operator $\mathbf{I}_{m_l}$ are not necessarily trace preserving.
Namely when there is only one reachable final memory state $m_l$ with probability one given error $\mathbf{E}$ and initial codestate $|\psi\>$ then the constant is independent of $m_l$.
Additionally when all error maps $\mathcal{E}^{(0)},\dots,\mathcal{E}^{(l)}$ are trace preserving then we have $\lambda_{m_l}=1$.
Now given a precise definition of a successful recovery for error-correction, in the following we give two equivalent necessary and sufficient conditions in an algebraic form and in an information-theoretic form.

\subsection{Algebraic error-correction condition}

Now we state the algebraic necessary and sufficient error-correction condition for a strategic code.
For notational simplicity, below we suppress the identity operators for operator multiplications to match the dimensions, e.g. for operators $M\in\mathscr{H}_A\otimes\mathscr{H}_C$ and $N\in\mathscr{H}_C\otimes\mathscr{H}_B$ we write $MN$ when we mean $(M\otimes I_B)(I_A\otimes N)$.

\begin{theorem}\label{thm:algebraic_KL_condition}
    A strategic code $(\mathscr{S}_{Q_0},\mathbf{I})$ corrects $\mathfrak{E}$ if and only if
    \begin{equation}\label{eqn:choi_oper_dynamical_KL_condition}
    \begin{gathered}
        \ll E_{e'}| \big( |C_{m_l}\rr\ll C_{m_l,o}| \otimes |j\>\<i| \big) |E_e\rr = \lambda_{e',e,m_l,o} \delta_{j,i}
    \end{gathered}
    \end{equation}
    for a constant $\lambda_{e',e,m_l,o}\in\C$, and for all $m_l$, all check measurement outcome sequence $o\in O_{m_l}$, all pairs of error sequences $e,e'$, and all $i,j$.
    
    Here $|C_{m_l}\rr = \sum_{o\in O_{m_l}} |C_{m_l,o}\rr$ and $|C_{m_l,o}\rr$ is an eigenvector of interrogator operator $\mathbf{I}_{m_l}$ as defined in eqn.~\eqref{eqn:eigenvector_interrogator}.
    Vectors $|i\>,|j\>$ are orthonormal basis vectors of initial codespace $\mathscr{S}_{Q_0}$.
\end{theorem}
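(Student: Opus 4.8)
The plan is to reduce Eq.~\eqref{eqn:dynamical_QECC} to a statement about a single effective Kraus-operator family and then invoke the standard Knill--Laflamme argument. First I would rewrite the left-hand side of the correctability condition in Kraus form using Eq.~\eqref{eqn:error_check_kraus_repr}: the combined map $\rho\mapsto\mathbf{E}\ast\mathbf{I}_{m_l}\ast\rho$ has Kraus operators $K_{e,m_l,o}$ indexed by the pair $(e,o)$ with $o\in O_{m_l}$, acting from $\mathscr{S}_{Q_0}$ to $\mathscr{H}_{Q_l'}$ (tracing out the environment registers $E_r$ along the way, since the decoder does not touch them). Correctability says exactly that there is a CPTP map $\mathcal{D}_{m_l}$ with $\mathcal{D}_{m_l}\bigl(\sum_{e,o} K_{e,m_l,o}\ketbra{\psi}K_{e,m_l,o}^\dagger\bigr)=\lambda_{m_l}\ketbra{\psi}$ for all $|\psi\>\in\mathscr{S}_{Q_0}$. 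This is precisely the hypothesis of the generalized (non-trace-preserving) Knill--Laflamme theorem applied to the Kraus family $\{K_{e,m_l,o}\}$ restricted to $\mathscr{S}_{Q_0}$, for each fixed $m_l$.

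Next I would apply that theorem in each direction. For sufficiency: the standard argument shows that if $P K_{e',m_l,o'}^\dagger K_{e,m_l,o} P = \mu_{(e',o'),(e,o),m_l} P$ (with $P$ the projector onto $\mathscr{S}_{Q_0}$ and $\mu$ a Hermitian matrix in the composite index), one can diagonalize $\mu$, pass to a new Kraus basis in which the operators act as orthogonal isometries onto distinguishable subspaces of $\mathscr{H}_{Q_l'}$, and build $\mathcal{D}_{m_l}$ as the measure-and-rotate-back recovery; the constant $\lambda_{m_l}$ comes out as $\operatorname{tr}\mu$ restricted appropriately, which may depend on $m_l$ exactly as the remark after Definition~\ref{def:dynamical_QECC} anticipates. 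For necessity: if a recovery exists, composing it with the noisy map and using that a recovery which fixes all of $\mathscr{S}_{Q_0}$ (up to the scalar) forces the Kraus products to act as a scalar on $P$ — this is the contractivity/fixed-point half of Knill--Laflamme — yields the same condition on $\mu$.

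The remaining work is purely translational: I must check that the matrix element condition on $\mu_{(e',o'),(e,o),m_l}$ is equivalent to Eq.~\eqref{eqn:choi_oper_dynamical_KL_condition}. Writing $K_{e,m_l,o}$ in terms of the vectorized objects, $K_{e',m_l,o'}^\dagger K_{e,m_l,o}$ has matrix elements $\<j|K_{e',m_l,o'}^\dagger K_{e,m_l,o}|i\>$, and by the link-product/Choi bookkeeping of Appendix~\ref{app:choi_jamiolkowski_isomorphism_link_product} this is exactly $\ll E_{e'}|\bigl(|C_{m_l,o'}\rr\ll C_{m_l,o}|\otimes|j\>\<i|\bigr)|E_e\rr$; summing the ``bra'' side over $o'\in O_{m_l}$ produces the $|C_{m_l}\rr=\sum_{o'}|C_{m_l,o'}\rr$ appearing in the theorem, and the Knill--Laflamme requirement ``$=\mu\,\delta_{j,i}$'' becomes ``$=\lambda_{e',e,m_l,o}\,\delta_{j,i}$.'' One subtlety to handle carefully is whether the sum over $o'$ on one side only (rather than over both $o$ and $o'$) still captures the full condition: I would argue that because the distinct $o\in O_{m_l}$ correspond to orthogonal branches that the decoder may distinguish classically (they are part of the recorded outcome record, even though they collapse to the same $m_l$), the off-diagonal-in-$o$ blocks of $\mu$ are unconstrained, so it suffices to keep $o$ free and sum over $o'$ — and symmetry in $(e,o)\leftrightarrow(e',o')$ then gives the stated asymmetric-looking but equivalent form.

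I expect the main obstacle to be precisely this last point: disentangling which index pairs must be ``block-diagonalized away'' because they are classically flagged (the outcome sequences $o$, the error labels are not flagged) versus which genuinely must satisfy the Knill--Laflamme orthogonality, and verifying that summing over $o'$ on the bra side alone is both necessary and sufficient rather than an over- or under-statement. Everything else — the link-product identity turning operator composition into the quoted vectorized expression, and the invocation of the generalized Knill--Laflamme theorem — is routine given the machinery already set up in the earlier sections and appendices.
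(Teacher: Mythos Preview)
Your overall strategy---recasting $\mathbf{E}\ast\mathbf{I}_{m_l}\ast(\cdot)$ via its Kraus family $\{K_{e,m_l,o}\}_{(e,o)}$ and then running a Knill--Laflamme-type argument---matches the paper's. The necessity direction works essentially as you describe: the paper derives $D_{k|m_l}K_{e,m_l,o}\Pi_{Q_0}=\gamma_{k,e,o}^{(m_l)}\Pi_{Q_0}$ and from this one even gets the \emph{unsummed}-in-$o'$ condition $\Pi_{Q_0}K_{e',m_l,o'}^\dagger K_{e,m_l,o}\Pi_{Q_0}\propto\Pi_{Q_0}$, so the stated summed form follows by summing over $o'$.

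The gap is in the sufficiency direction, precisely at the point you flag as the main obstacle. Your proposed resolution rests on the claim that the decoder can classically distinguish the individual outcome sequences $o\in O_{m_l}$. This is false in the setup of Section~\ref{sec:general_QECC}: the decoder $\mathcal{D}_{m_l}$ receives \emph{only} the coarse-grained memory state $m_l$, not the underlying $o$; the whole purpose of the memory-update functions $f_r$ is that many outcome sequences collapse into the same $m_l$, and the decoder must succeed uniformly across them. Consequently the off-diagonal-in-$o$ blocks of the Gram matrix $\mu_{(e',o'),(e,o),m_l}$ are \emph{not} automatically unconstrained, and you cannot pass from the summed-in-$o'$ hypothesis of the theorem to the full (unsummed) Knill--Laflamme condition by the argument you sketch. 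So simply invoking standard KL on the joint family does not give sufficiency of Eq.~\eqref{eqn:choi_oper_dynamical_KL_condition}.

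The paper avoids this by never going through the unsummed condition for sufficiency. It builds the recovery directly from the summed objects: one forms the doubly-summed Hermitian matrix $\Lambda_{m_l}=[\sum_o\lambda_{e',e,m_l,o}]_{e',e}$, diagonalizes it in the error index alone (transforming $|E_e\rr\mapsto|F_e\rr$), and defines decoder Kraus operators $D_{e'|m_l}\propto \Pi_{Q_0}\bigl(\sum_{o'}F_{e',m_l,o'}\bigr)^\dagger$ using the \emph{summed} operators. The singly-summed hypothesis (transported to the $F$'s) then yields $D_{e'|m_l}F_{e,m_l,o}|\psi\>\propto|\psi\>$ for every individual $o$, so the decoder recovers the state without ever resolving $o$; orthogonality of the $D_{e|m_l}^\dagger D_{e|m_l}$ (needed for trace preservation) comes from the diagonalized doubly-summed matrix. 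This decoder construction from the summed condition is the non-routine step your proposal is missing.
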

\begin{proof}
    First assume that strategic code $(\mathscr{S}_{Q_0},\mathbf{I})$ corrects $\mathfrak{E}$, hence there is a set of decoding channels $\{\mathcal{D}_{m_l}\}_{m_l}$ such that~\eqref{eqn:dynamical_QECC} is satisfied.
    First note that we can express eqn.~\eqref{eqn:dynamical_QECC} in Kraus representation and use $\Pi_{Q_0}|\psi\> = |\psi\>$ where $\Pi_{Q_0}$ is a projector to initial codespace $\mathscr{S}_{Q_0}$ to obtain
    \begin{equation}
    \begin{aligned}
        &\mathcal{D}_{m_l}(\mathbf{E} \ast \mathbf{I}_{m_l} \ast \ketbra{\psi}) \\
        &= \sum_{e,o} \mathcal{D}_{m_l}(K_{e,m_l,o}\Pi_{Q_0}\ketbra{\psi}\Pi_{Q_0}K_{e,m_l,o}^\dag) \\
        &= \sum_{e,o,k} D_{k|m_l}K_{e,m_l,o}\Pi_{Q_0}\ketbra{\psi}\Pi_{Q_0}K_{e,m_l,o}^\dag D_{k|m_l}^\dag \\
        &= \lambda_{m_l} \Pi_{Q_0}\ketbra{\psi} \Pi_{Q_0} \;,
    \end{aligned}
    \end{equation}
    where $\{D_{k|m_l}\}_k$ are the Kraus operators of decoding channel $\mathcal{D}_{m_l}$.
    By the non-uninqueness of Kraus representation (see e.g.~\cite{nielsen2010quantum}), both $D_{k|m_l}K_{e,m_l,o}$ and $\sqrt{\lambda_{m_l}}\Pi_{Q_0}$ can be thought of as Kraus representations of the composition of maps $\mathcal{D}_{m_l}\circ (\mathbf{E}\ast\mathbf{I}_{m_l}\ast(\cdot))$, where the latter Kraus representation consists of only one operator.
    Thus there must exist some complex number $\gamma_{k,e,o}^{(m_l)}$ such that $D_{k|m_l}K_{e,m_l,o}\Pi_{Q_0} = \gamma_{k,e,o}^{(m_l)}\Pi_{Q_0}$ for all $e,o,k$.
    Thus we have
    \begin{equation}
    \begin{aligned}
        &\ll E_{e'}| \big( |C_{m_l}\rr\ll C_{m_l,o}| \otimes |j\>\<i| \big) |E_e\rr \\
        &= \sum_{o',k} (\ll K_{e',m_l,o'}|j\>)  D_{k|m_l}^\dag D_{k|m_l} (\<i|K_{e,m_l,o}\rr) \\
        &= \sum_{o',k} \<j|\Pi_{Q_0} K_{e',m_l,o'}^\dag D_{k|m_l}^\dag D_{k|m_l} K_{e,m_l,o} \Pi_{Q_0}|i\> \\
        &= \sum_{o',k} \gamma_{k,e',o'}^{(m_l)*} \gamma_{k,e,o}^{(m_l)} \<j|i\> \\
        &= \Big(\sum_k \big(\sum_{o'}\gamma_{k,e',o'}^{(m_l)*}\big) \,\gamma_{k,e,o}^{(m_l)}\Big) \delta_{j,i} \\
        &= \lambda_{e',e,m_l,o} \delta_{j,i} \;,
    \end{aligned}
    \end{equation}
    where the second equality is obtained by $\sum_k D_{k|m_l}^\dag D_{k|m_l} = I_{Q_l'}$ and by $|K_{e,m_l,o}\rr = \ll C_{m_l,o}| E_{e}\rr$, the third equality by $\<i|K_{e,m_l,o}\rr = K_{e,m_l,o}|i\>$, and the fourth equality by the non-unique Kraus representation relation $D_{k|m_l}K_{e,m_l,o}\Pi_{Q_0} = \gamma_{k,e,o}^{(m_l)}\Pi_{Q_0}$.
    Thus we show the necessity of condition in Theorem~\ref{thm:algebraic_KL_condition}.

    Now assume that eqn.~\eqref{eqn:choi_oper_dynamical_KL_condition} holds and consider $\lambda_{e',e,m_l} = \sum_o \lambda_{e',e,m_l,o}$.
    Note that $\Lambda_{m_l} = [\lambda_{e',e,m_l}]_{e',e}$ is a Hermitian matrix since by eqn.~\eqref{eqn:choi_oper_dynamical_KL_condition}
    \begin{equation}
    \begin{aligned}
        &\lambda_{e',e,m_l}^* = \sum_o \lambda_{e',e,m_l,o}^* \\
        &= \sum_{o',o} (\ll E_{e'}|(|C_{m_l,o'}\rr\ll C_{m_l,o}| \otimes \ketbra{i})|E_e\rr)^* \\
        &= \sum_{o',o} \ll E_e|(|C_{m_l,o}\rr\ll C_{m_l,o'}| \otimes \ketbra{i})|E_{e'}\rr \\
        &= \sum_{o'} \lambda_{e',e,m_l,o'}^* \\
        &= \lambda_{e,e',m_l} \;.
    \end{aligned}
    \end{equation}
    Since matrix $\Lambda_{m_l}$ is Hermitian, it can be diagonalized to a diagonal matrix $[d_{e',e,m_l}]_{e',e}$ where $d_{e',e,m_l}=0$ if $e' \neq e$ as
    \begin{equation}\label{eqn:diag_constant_dynamical_KL_cond}
    \begin{aligned}
        d_{e',e,m_l} = \sum_{\Tilde{e},\Bar{e}} u_{e',\Tilde{e}}^* u_{\Bar{e},e} \lambda_{\Tilde{e},\Bar{e},m_l}
    \end{aligned}
    \end{equation}
    where $U= [u_{e',e}]_{e',e}$ is a unitary matrix.

    Now consider $|F_e\rr = \sum_{\Bar{e}} u_{\Bar{e},e} |E_{\Bar{e}}\rr$ so that
    \begin{equation}
    \begin{aligned}
        \sum_e \<g|F_e\rr\ll F_e|g'\> &= \sum_{e,\Tilde{e},\Bar{e}} u_{e,\Tilde{e}}^* u_{\Bar{e},e} \<g|E_{\Bar{e}}\rr\ll E_{\Tilde{e}}|g'\> \\
        &= \sum_{\Tilde{e},\Bar{e}} \delta_{\Tilde{e},\Bar{e}} \<g|E_{\Bar{e}}\rr\ll E_{\Tilde{e}}|g'\> \\
        &= \sum_e \<g|E_e\rr\ll E_e|g'\>
    \end{aligned}
    \end{equation}
    for any $|g\>,|g'\> \in (\bigotimes_{r=0}^{l-1} \mathscr{S}_{Q_r}\otimes\mathscr{S}_{Q_r'})\otimes\mathscr{S}_{Q_l}$.
    Therefore
    \begin{equation}
    \begin{aligned}
        \mathbf{E} \ast \mathbf{I}_{m_l} \ast |i\>\<j| &= \sum_e |E_e\rr\ll E_e| \ast \mathbf{I}_{m_l} \ast |i\>\<j| \\ 
        &= \sum_e |F_e\rr\ll F_e| \ast \mathbf{I}_{m_l} \ast |i\>\<j| \;,
    \end{aligned}
    \end{equation}
    and also
    \begin{equation}\label{eqn:transformed_error_opers_KL_cond}
    \begin{aligned}
        &\ll F_{e'}|\Big( |C_{m_l}\rr\ll C_{m_l,o}| \otimes |j\>\<i| \Big)|F_e\rr \\
        &= \sum_{\Tilde{e},\Bar{e}} u_{e',\Tilde{e}}^* u_{\Bar{e},e} \ll E_{\Tilde{e}}|\Big( |C_{m_l}\rr\ll C_{m_l,o}| \otimes |j\>\<i| \Big)|E_{\Bar{e}}\rr \\
        &= \sum_{\Tilde{e},\Bar{e}} u_{e',\Tilde{e}}^* u_{\Bar{e},e} \lambda_{\Tilde{e},\Bar{e},m_l,o} \delta_{j,i} =: \Tilde{\lambda}_{e',e,m_l,o} \delta_{j,i} \;,
    \end{aligned}
    \end{equation}
    for some constant $\Tilde{\lambda}_{e',e,m_l,o} \in \C$.

    For each error sequence $e'$, consider an operator defined by $D_{e'|m_l} = \frac{1}{\sqrt{d_{e',e'}}} \ll F_{e'}|(|C_{m_l}\rr |\Pi_{Q_0}\rr)$.
    Thus by using eqn.~\eqref{eqn:choi_oper_dynamical_KL_condition} and eqn.~\eqref{eqn:transformed_error_opers_KL_cond} the action of $D_{e'|m_l}$ on the codestate at the start of the decoding round is
    \begin{equation}
    \begin{aligned}
        &D_{e'|m_l} (\ll C_{m_l,o}|\<\psi|)|F_e\rr \\
        &= \sum_{i,j} |j\>\, \psi_i  \ll F_{e'}|(|C_{m_l}\rr\ll C_{m_l,o}| \otimes |j\>\<i|)|F_e\rr \\
        &= \sum_{i,j,} |j\>\, \psi_i  \Tilde{\lambda}_{e',e,m_l,o} \delta_{j,i} = \Tilde{\lambda}_{e',e,m_l,o} |\psi\> \;.
    \end{aligned}
    \end{equation}
    Therefore the overall action of a linear map $\mathcal{D}_{m_l}(\rho) = \sum_e D_{e|m_l} \rho D_{e|m_l}^\dag$ on the density operator of the code at the start of the decoding round is
    \begin{equation}
    \begin{aligned}
        &\sum_e \mathcal{D}_{m_l}(|F_e\rr\ll F_e| \ast \mathbf{I}_{m_l} \ast \ketbra{\psi}) \\
        &= \sum_{e,e',o} D_{e'|m_l} (\ll C_{m_l,o}|\<\psi|)|F_e\rr\ll F_e| (|C_{m_l,o}\rr|\psi\>) D_{e'|m_l}^\dag \\
        &= \sum_{e,e',o} \Tilde{\lambda}_{e',e,m_l,o} \Tilde{\lambda}_{e',e,m_l,o}^* \ketbra{\psi} = \lambda_{m_l} \ketbra{\psi}
    \end{aligned}
    \end{equation}
    which recovers the initial state $|\psi\>$ as in~\eqref{eqn:dynamical_QECC}.

    Since $\mathcal{D}_{m_l}$ is a completely positive map, we now show that we can add another operator to make it trace-preserving. 
    Consider polar decomposition
    \begin{equation}
    \begin{aligned}
        &(\ll C_{m_l}| \ll\Pi_{Q_0}|)|F_e\rr \\
        &= U_{e,m_l} \sqrt{\ll F_e| \big( |C_{m_l}\rr\ll C_{m_l}|\otimes| \Pi_{Q_0}\rr\ll\Pi_{Q_0}| \big) |F_e\rr } \\
        &= U_{e,m_l} \Pi_{Q_0}\sqrt{d_{e,e,m_l}}
    \end{aligned}
    \end{equation}
    where the last equality is due to eqn.~\eqref{eqn:diag_constant_dynamical_KL_cond}.
    Then can define projector $\Pi_{e,m_l} = U_{e,{m_l}} \Pi_{Q_0} U_{e,m_l}^\dag = \frac{1}{\sqrt{d_{e,e,m_l}}} (\ll C_{m_l}| \ll\Pi_{Q_0}|)|F_e\rr U_{e,m_l}^\dag$, satisfying orthogonality
    \begin{equation}
    \begin{aligned}
        &\Pi_{e',m_l}^\dag \Pi_{e,m_l} \\
        &= \frac{ U_{e',m_l} \ll F_{e'}|(|C_{m_l}\rr\ll C_{m_l}| \otimes |\Pi_{Q_0}\rr\ll\Pi_{Q_0}|)|F_e\rr U_{e,m_l}^\dag}{\sqrt{d_{e',e',m_l} d_{e,e,m_l}}} \\
        &= \frac{ d_{e',e,m_l} U_{e',m_l} \Pi_{Q_0} U_{e,m_l}^\dag}{\sqrt{d_{e',e',m_l} d_{e,e,m_l}}} \;,
    \end{aligned}
    \end{equation}
    since $d_{e',e,m_l}=0$ for all $e\neq e'$.
    Therefore for each $e$ we have
    \begin{equation}
    \begin{aligned}
        &D_{e|m_l}^\dag D_{e|m_l} = \frac{(\ll C_{m_l}|\ll \Pi_{Q_0}|)|F_e\rr\ll F_e|(|C_{m_l}\rr|\Pi_{Q_0}\rr)}{d_{e,e,m_l}} \\
        &= U_{e,m_l} \Pi_{Q_0} U_{e,m_l}^\dag = \Pi_{e,m_l} \;.
    \end{aligned}
    \end{equation}
    Then by adding projector $\Pi^\perp$ onto a space orthogonal to $\{\Pi_{e,m_l}\}_e$ to the set of operators $\{D_{e|m_l}\}_e$ defining $\mathcal{D}_{m_l}$ we have $\Pi^\perp + \sum_e D_{e|m_l}^\dag D_{e|m_l} = I$, hence $\mathcal{D}_{m_l}$ is trace-preserving.
\end{proof}

As noted in the proof and by using the Kraus operators in eqn.~\eqref{eqn:error_check_kraus_repr}, we can equivalently express eqn.~\eqref{eqn:choi_oper_dynamical_KL_condition} as
\begin{equation}
\begin{aligned}
    &\ll E_{e'}| \Big( |C_{m_l}\rr\ll C_{m_l,o}| \otimes |j\>\<i|\big)|E_e\rr \\
    &= \tr(|K_{e',m_l}\rr\ll K_{e,m_l,o}| \ast |j\>\<i|) \\
    &= \<j|K_{e',m_l}^\dag K_{e,m_l,o}|i\> = \delta_{j,i} \lambda_{e',e,m_l,o} \;,
\end{aligned}
\end{equation}
where $K_{e',m_l} = \sum_o K_{e',m_l,o}$.
This expression tells us for a final memory state $m_l$ of a strategic code correcting $\mathfrak{E}$, the sequence of check measurement outcomes in $O_{m_l}$ forms orthogonal an subspace $\mathscr{V}_{i,m_l}$ for each eigenbasis $\{|i\>_{Q_0}\}_i$ spanning initial codestate $\mathscr{S}_{Q_0}$, regardless of sequence of error.
Namely subspace $\mathscr{V}_{i,m_l}$ is spanned by $\{\ll C_{m_l,o}|\<i|E_e\rr\}_{e,o}$.
Moreover, the independence of constant $\lambda_{e',e,m_l,o}$ from the initial codestate also indicates that the code state at the start of the decoding round is \textit{uncorrelated} with the noise environment, although it generally depends on the final memory state $m_l$.
Due to this independence between the noise environment and the code state, it is sufficient for the decoder to have the information about $m_l$ to recover the initial state, i.e. to construct a projective measurement $\{\Pi_{e,m_l}\}_e$ used in the proof to project the noisy codestate onto subspace $\mathscr{V}_{i,m_l}$ and perform recovery unitary operator $U_{e,m_l}$ according to outcome $e$ to obtain the initial codestate.

For the special case when all check measurement outcomes are stored in the classical memory, i.e. there is a bijection between each memory state $m_r$ and sequence of check measurement outcomes $o_1,\dots,o_r$ for all $r$, the condition in Theorem~\ref{thm:algebraic_KL_condition} can be stated in a more symmetric manner as $\mathbf{I}_{m_l} = |C_{m_l,o}\rr\ll C_{m_l,o}|$.
Since each final memory state $m_l$ and each sequence of check measurement outcomes $o=o_1,\dots,o_l$ have a one-to-one correspondence, we can simply write $|C_{m_l}\rr := |C_{m_l,o}\rr$.

\begin{corollary}\label{cor:algebraic_condition_all_outcome_memory}
    A strategic code $(\mathscr{S}_{Q_0},\mathbf{I})$ storing all check measurement outcomes in its memory corrects error $\mathfrak{E}$ if and only if
    \begin{equation}\label{eqn:symmetric_dynamical_KL_condition}
    \begin{gathered}
        \ll E_{e'} | \big( |C_{m_l}\rr\ll C_{m_l}| \otimes |j\>\<i| \big) | E_{e}\rr = \lambda_{e',e,m_l} \delta_{j,i}
    \end{gathered}
    \end{equation}
    where $\lambda_{e',e,m_l}\in\C$ is some constant for all final memory state $m_l$ and all pairs of error sequences $e,e'$.
\end{corollary}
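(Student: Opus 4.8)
The plan is to derive Corollary~\ref{cor:algebraic_condition_all_outcome_memory} directly from Theorem~\ref{thm:algebraic_KL_condition} by specializing to the hypothesis that all check measurement outcomes are stored. First I would observe that the assumption ``storing all check measurement outcomes'' means exactly that the memory update functions $f_1,\dots,f_l$ are injective in the outcome argument, so that $f_l^*$ is a bijection between outcome sequences $o=o_1,\dots,o_l$ and final memory states $m_l$. Consequently each set $O_{m_l}=\{o:f_l^*(o)=m_l\}$ is a singleton, say $O_{m_l}=\{o(m_l)\}$. Plugging this into eqn.~\eqref{eqn:eigenvector_interrogator} gives $\mathbf{I}_{m_l}=|C_{m_l,o(m_l)}\rr\ll C_{m_l,o(m_l)}|$, and into the definition $|C_{m_l}\rr=\sum_{o\in O_{m_l}}|C_{m_l,o}\rr$ gives $|C_{m_l}\rr=|C_{m_l,o(m_l)}\rr$. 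This justifies the notational identification $|C_{m_l}\rr:=|C_{m_l,o}\rr$ stated just before the corollary, and the interrogator operator takes the symmetric form $\mathbf{I}_{m_l}=|C_{m_l}\rr\ll C_{m_l}|$.

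Next I would simply substitute these identifications into the condition~\eqref{eqn:choi_oper_dynamical_KL_condition} of Theorem~\ref{thm:algebraic_KL_condition}. Since the only element of $O_{m_l}$ is $o(m_l)$, the universal quantifier ``for all $o\in O_{m_l}$'' collapses to a single instance, and $\ll C_{m_l,o}|$ in~\eqref{eqn:choi_oper_dynamical_KL_condition} becomes $\ll C_{m_l}|$. Likewise $|C_{m_l}\rr$ on the left factor of the outer product in~\eqref{eqn:choi_oper_dynamical_KL_condition} is already equal to $|C_{m_l}\rr$. Thus~\eqref{eqn:choi_oper_dynamical_KL_condition} reads
\begin{equation}
    \ll E_{e'}|\big(|C_{m_l}\rr\ll C_{m_l}|\otimes|j\>\<i|\big)|E_e\rr=\lambda_{e',e,m_l,o(m_l)}\,\delta_{j,i},
\end{equation}
and renaming the constant $\lambda_{e',e,m_l}:=\lambda_{e',e,m_l,o(m_l)}$ yields precisely eqn.~\eqref{eqn:symmetric_dynamical_KL_condition}. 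The converse direction is equally immediate: given~\eqref{eqn:symmetric_dynamical_KL_condition}, one recovers~\eqref{eqn:choi_oper_dynamical_KL_condition} for the unique $o\in O_{m_l}$ by the same substitution, and then Theorem~\ref{thm:algebraic_KL_condition} gives correctability. So the corollary is a clean logical specialization rather than a fresh argument.

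I would also briefly note, for completeness, why the hypothesis is the natural one: when $O_{m_l}$ has more than one element the cross terms $\ll C_{m_l,o'}|$ with $o'\neq o$ genuinely appear in~\eqref{eqn:choi_oper_dynamical_KL_condition} through $|C_{m_l}\rr=\sum_{o}|C_{m_l,o}\rr$, and the condition cannot be written purely in terms of the single rank-one operator $|C_{m_l}\rr\ll C_{m_l}|$ acting symmetrically; it is exactly the bijectivity that removes this asymmetry. The main (and only) subtlety to get right is the bookkeeping of the quantifiers — making sure that ``for all $o\in O_{m_l}$'' in the theorem and the collapse of the sum defining $|C_{m_l}\rr$ are handled consistently so that no stray cross term $\ll C_{m_l,o}|$ with $o$ ranging over a larger set survives. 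Since $|O_{m_l}|=1$ under the hypothesis, this is straightforward, and there is no hard analytic step: the corollary follows from Theorem~\ref{thm:algebraic_KL_condition} by direct substitution.
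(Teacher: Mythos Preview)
Your proposal is correct and follows essentially the same approach as the paper: the paper observes (in the paragraph immediately preceding the corollary) that the bijection between final memory states $m_l$ and outcome sequences $o$ forces $O_{m_l}$ to be a singleton, so $\mathbf{I}_{m_l}=|C_{m_l,o}\rr\ll C_{m_l,o}|$ and one may write $|C_{m_l}\rr:=|C_{m_l,o}\rr$, after which the corollary is stated as an immediate specialization of Theorem~\ref{thm:algebraic_KL_condition} without further argument. Your write-up is slightly more explicit about the quantifier collapse and the relabeling of the constant, but the logical content is identical.
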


\subsection{Static quantum error-correction condition as a special case}

From Theorem~\ref{thm:algebraic_KL_condition}, we can recover the Knill-Laflamme necessary and sufficient error-correction condition for static QECC~\cite{knill1997theory} (see also Appendix~\ref{app:KL_condition}), which is
\begin{equation}
    \<j|E_{e'}^\dag E_e|i\> = \lambda_{e',e} \delta_{j,i}
\end{equation}
where $|i\>,|j\>$ is an arbitrary pair of orthonormal vectors spanning codespace $\mathscr{S}_Q$ and $E_e,E_{e'}$ are a pair of Kraus operators of error map $\mathcal{E}(\rho_Q) = \sum_e E_e \rho_Q E_e^\dag$.
The static QECC scenario is obtained by setting the number of rounds to $l=0$ in a general QECC scenario, i.e. there is no sequence check instruments. 
The operator $|C_{m_l}\rr\ll C_{m_l,o}|$ in eqn.~\eqref{eqn:choi_oper_dynamical_KL_condition}simply becomes identity and vectorized error operators are of the form $|E_e\rr = \sum_j E_e|j\>|j\>$ and constant is simply $\lambda_{e',e}$ as there is no dependence on the check measurement outcomes.
Thus eqn.~\eqref{eqn:choi_oper_dynamical_KL_condition} becomes
\begin{equation}
\begin{aligned}
    \lambda_{e',e} \delta_{j,i} &= \ll E_{e'}|j\rr\ll i|E_e\rr = \<j|E_{e'}^\dag E_e|i\>
\end{aligned}
\end{equation}
giving us the Knill-Laflamme static QECC condition.

Lastly we note that without changing the strategic code framework, how strategic code error-correction is defined in Definition~\ref{def:dynamical_QECC} can be generalized as follows.
Instead of requiring the decoder output to be a state proportional to the initial codestate, we can instead introduce additional redundancy by encoding logical information in a subsystem of $\mathscr{S}_{Q_0}$ and requiring the decoder only to recover logical information stored in that subsystem.
Namely, given initial codestate $\rho\otimes\sigma$ we want to recover $\rho\otimes\sigma_{m_l}$ given final memory state $m_l$.
This is the generalization of the subsystem code~\cite{kribs2005unified,kribs2005operator,poulin2005stabilizer,nielsen2007algebraic} to the strategic code framework.
Corresponding to this definition, however, one needs a different necessary and sufficient condition than Theorem~\ref{thm:algebraic_KL_condition} and Theorem~\ref{thm:info_theoretic_condition}, which is left for future work.
For more details on the subsystem code generalization to strategic code see Appendix~\ref{app:strategic_subsystem_code}.

\subsection{Information-theoretic error-correction condition}

For a static QECC (special case of a strategic code with $l=0$), it was shown in~\cite{nielsen1998information} that a necessary and sufficient condition for a completely-positive, trace non-decreasing error map $\mathcal{E}:\mathscr{H}_Q\rightarrow\mathscr{H}_{Q'}$ to be correctable by QECC with codespace $\mathscr{S}_Q$ is
\begin{equation}\label{eqn:static_QECC_info_thheoretic_condition}
\begin{aligned}
    S(\rho^Q) = S(\rho^{Q'}) - S(\rho^{E'}) = S(\rho^{Q'}) - S(\rho^{R'Q'}) \;.
\end{aligned}
\end{equation}
Here a reference system $R$ is introduced, and $\rho^Q=\tr_R(\ketbra{\phi}^{RQ})$ where $\ketbra{\phi}^{RQ}$ is the maximally entangled state between initial system $Q$ and reference system $R$.
So, we have the density operators after the error $\rho^{Q'} = \mathcal{E}(\rho^Q) / \tr(\mathcal{E}(\rho^Q))$ and $\rho^{R'Q'} = \mathcal{I}_R\otimes\mathcal{E}(\ketbra{\phi}^{RQ})$ where $\rho^{E'}$ the marginal state of the noise environment of $\mathcal{E}$ when expressed as 
\begin{equation}
    \mathcal{E}(\cdot) = \tr_{E'}\Big( (I_{Q'}\otimes\Pi) V \cdot V^\dag(I_{Q'}\otimes\Pi) \Big)
\end{equation}
for some isometry $V:\mathscr{S}_Q\rightarrow\mathscr{S}_{Q'}\otimes\mathscr{S}_{E'}$ and orthogonal projector $\Pi\in\mathscr{H}_{E'}$.
Generalization of information-theoretic condition~\eqref{eqn:static_QECC_info_thheoretic_condition} to subsystem codes is shown in~\cite{nielsen2007algebraic}.
The term $S(\rho^{Q'}) - S(\rho^{R'Q'})$ is the so-called ``coherent information'', which quantifies the amount of information about $\rho^Q$ contained in $\rho^{Q'}$~\cite{schumacher1996quantum,nielsen1998information,nielsen2007algebraic}.

In a general QECC scenario, we instead have a sequence of completely positive map $\mathcal{E}^{(0)},\dots,\mathcal{E}^{(l)}$. 
In what follows, we omit normalization for the states and density operators for notational simplicity.
Now consider the density operator in $\mathscr{H}_{R_l'}\otimes\mathscr{H}_{Q_l'}\otimes\mathscr{H}_{M_l}\otimes\mathscr{H}_{E_l}$ with one-half of a maximally entangled state $\sum_i |i\>_{R_0} |i\>_{Q_0}$ as an input initial state in $\mathscr{S}_{Q_0}$ and given final memory state $m_l$
\begin{widetext}
\begin{equation}\label{eqn:total_final_density_given_memory_and_error}
\begin{aligned}
    \rho_{m_l,e,e',o,o'}^{R_l'Q_l'M_lE_l} 
    &= \sum_{i,j} |i\>\<j|_{R_0} \otimes \big( K_{e,m_l,o} |i\>\<j|_{Q_0} K_{e',m_l,o'}^\dag \big) \otimes |o\>\<o'|_{M_l} \otimes |e\>\<e'|_{E_l}
\end{aligned}
\end{equation}
\end{widetext}
where $o,o'\in O_{m_l}$ is a pair of sequences of check measurement outcomes resulting in final memory state $m_l$ and $K_{e,m_l,o}$ is an operator defined by $|K_{e,m_l,o}\rr = \ll C_{m_l,o}|E_e\rr$ (see eqn.~\eqref{eqn:error_check_kraus_repr}. 
This scenario is illustrated in Fig.~\ref{fig:dynamical_QECC2}.

\begin{figure}
    \centering
    \includegraphics{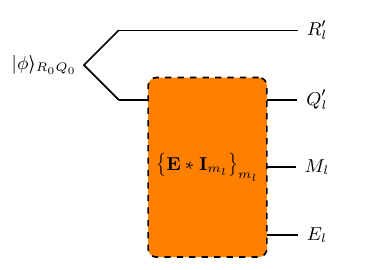}
    \caption{QECC scenario where the initial code state is one-half of maximally entangled state $|\phi\>=\sum_i |i\>_{R_0}|i\>_{Q_0}$ between a reference $R_0$ system and the initial code space $Q_0$.
    After $l$ rounds of errors and check measurements ending in final memory state $m_l$, we obtain the global density oeprator $\rho_{m_l}^{R_l'Q_l'M_lE_l}$ of the reference system, code system, check measurement outcome sequence, and noise environment.}
    \label{fig:dynamical_QECC2}
\end{figure}

Now consider density operator
\begin{equation}\label{eqn:total_final_density_given_memory}
\begin{aligned}
    \rho_{m_l}^{R_l'Q_l'M_lE_l} &:= \sum_{e,e',o,o'} \rho_{m_l,e,e',o,o'}^{R_l'Q_l'M_lE_l} \;.
\end{aligned}
\end{equation}
We also define the marginal density operators as $\rho_{m_l}^{R_l'} := \tr_{Q_l'M_lE_l}(\rho_{m_l}^{R_l'Q_l'M_lE_l})$ and $\rho_{m_l}^{M_lE_l} := \tr_{R_l'Q_l'}(\rho_{m_l}^{R_l'Q_l'M_lE_l})$.
We also denote the density operator over all possible final memory state $m_l$ as
\begin{equation}
\begin{aligned}
    \rho^{R_l'Q_l'M_lE_l} 
    &= \sum_{m_l} P_{M_l}(m_l) \Tilde{\rho}_{m_l}^{R_l'Q_l'M_lE_l}
\end{aligned}
\end{equation}
for $P_{M_l}(m_l) = \tr(\rho_{m_l}^{R_l'Q_l'M_lE_l})$ and density operator $\Tilde{\rho}_{m_l}^{R_l'Q_l'M_lE_l} = \rho_{m_l}^{R_l'Q_l'M_lE_l}/P_{M_l}(m_l)$.
Here, $P_{M_l}(m_l)$ can be interpreted as the probability of the final memory state being $m_l$.
Hence $\Tilde{\rho}_{m_l}^{R_l'Q_l'M_lE_l}$ is the density operator at the start of the decoding round, given that the memory storing information about the check measurement outcomes is $m_l$.

Now we show a necessary and sufficient information-theoretic conditions for strategic code to correct error $\mathfrak{E}=\{|E_e\rr\ll E_e|\}_{e}$.

\begin{theorem}\label{thm:info_theoretic_condition}
    The following are equivalent:
    \begin{enumerate}
        \item A strategic code $(\mathscr{S}_{Q_0},\mathbf{I})$ corrects error $\mathfrak{E}$.

        \item $S(\rho_{m_l}^{R_l'M_lE_l}) = S(\rho_{m_l}^{R_l'}) + S(\rho_{m_l}^{M_lE_l})$ for all final memory state $m_l$ such that $P_{M_l}(m_l) > 0$.

        \item $I_{\rho_{m_l}^{R_l'M_lE_l}}(R_l' : M_lE_l) = 0$ for all final memory state $m_l$ such that $P_{M_l}(m_l) > 0$.
    \end{enumerate}
\end{theorem}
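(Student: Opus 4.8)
The plan is to prove the chain of equivalences $1 \Leftrightarrow 2 \Leftrightarrow 3$, using the algebraic condition of Theorem~\ref{thm:algebraic_KL_condition} as the bridge to statement~1. The equivalence $2 \Leftrightarrow 3$ is purely information-theoretic: since $I_{\rho}(A:B) = S(\rho^A) + S(\rho^B) - S(\rho^{AB})$ by definition, statement~2 for a given $m_l$ says exactly that the mutual information $I_{\rho_{m_l}^{R_l'M_lE_l}}(R_l' : M_lE_l)$ vanishes (note the density operator $\rho_{m_l}^{R_l'M_lE_l}$ here is subnormalized, but rescaling by $P_{M_l}(m_l)>0$ only shifts all three entropies by the same $\log P_{M_l}(m_l)$ up to a known bookkeeping term, so vanishing of the combination is unaffected). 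So the real content is $1 \Leftrightarrow 2$.

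For $1 \Rightarrow 2$: assume the code corrects $\mathfrak{E}$, so Theorem~\ref{thm:algebraic_KL_condition} gives $\ll E_{e'}|(|C_{m_l}\rr\ll C_{m_l,o}|\otimes|j\>\<i|)|E_e\rr = \lambda_{e',e,m_l,o}\delta_{j,i}$, equivalently $\<j|K_{e',m_l}^\dag K_{e,m_l,o}|i\> = \lambda_{e',e,m_l,o}\delta_{j,i}$. I would substitute this into the explicit form~\eqref{eqn:total_final_density_given_memory_and_error}--\eqref{eqn:total_final_density_given_memory} of $\rho_{m_l}^{R_l'Q_l'M_lE_l}$ and trace out $Q_l'$ to compute $\rho_{m_l}^{R_l'M_lE_l}$. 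Taking the partial trace over $Q_l'$ replaces $K_{e,m_l,o}|i\>\<j|_{Q_0}K_{e',m_l,o'}^\dag$ by $\tr(K_{e',m_l,o'}^\dag K_{e,m_l,o}|i\>\<j|)$; summing over $o'$ and using the algebraic condition this becomes $\lambda_{e',e,m_l,o}\delta_{i,j}$ (with $o'$-sum absorbed since $K_{e',m_l}=\sum_{o'}K_{e',m_l,o'}$). The upshot should be that $\rho_{m_l}^{R_l'M_lE_l}$ factorizes as $\rho_{m_l}^{R_l'}\otimes\tau_{m_l}^{M_lE_l}$ for some operator $\tau$ on $M_lE_l$ — the $R_0$-index part collapses to $\sum_i|i\>\<i|_{R_0}$ (proportional to $\rho_{m_l}^{R_l'}$) precisely because the $\lambda$ constants are independent of $i,j$. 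A product state has additive entropy, giving statement~2.

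For $2 \Rightarrow 1$: from vanishing mutual information I get $\rho_{m_l}^{R_l'M_lE_l} = \rho_{m_l}^{R_l'}\otimes\tilde\tau_{m_l}^{M_lE_l}$ (a state with $I(R_l':M_lE_l)=0$ is a tensor product — this is the equality case of subadditivity). I then need to pull this product structure back into the algebraic Knill--Laflamme--type identity~\eqref{eqn:choi_oper_dynamical_KL_condition}. Reading off matrix elements $\<j|_{R_0}\otimes\<e'|_{E_l}\otimes\<o'|_{M_l}\,(\cdot)\,|i\>_{R_0}\otimes|e\>_{E_l}\otimes|o\>_{M_l}$ of both sides of the factorization recovers exactly $\tr(K_{e',m_l,o'}^\dag K_{e,m_l,o}|i\>\<j|) \propto \delta_{i,j}\times(\text{something independent of }i,j)$; summing over $o'$ and invoking the equivalent form $\<j|K_{e',m_l}^\dag K_{e,m_l,o}|i\> = \lambda\,\delta_{j,i}$ noted right after Theorem~\ref{thm:algebraic_KL_condition} gives condition~\eqref{eqn:choi_oper_dynamical_KL_condition}, whence correctability by Theorem~\ref{thm:algebraic_KL_condition}.

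The main obstacle I anticipate is the bookkeeping in the partial trace over $Q_l'$: one must be careful that tracing out $Q_l'$ really does convert the operator $|C_{m_l}\rr\ll C_{m_l,o}|$ (which lives on the full chain of check-instrument input/output spaces) correctly, and that the reference index $R_0$ on the state $|i\>\<j|_{Q_0}$ and the output index $Q_l'$ are not conflated — the algebraic condition controls the combination $\<j|K_{e',m_l}^\dag K_{e,m_l,o}|i\>$ where $i,j$ label the \emph{input} codespace, and this is exactly what survives the trace over the output $Q_l'$. A secondary subtlety is the subnormalization: $\rho_{m_l}^{R_l'M_lE_l}$ has trace $P_{M_l}(m_l)$ rather than $1$, so "vanishing mutual information" must be interpreted via the normalized $\tilde\rho_{m_l}$, and I would state this normalization convention explicitly at the start of the proof to keep the entropy identities clean.
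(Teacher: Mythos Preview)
Your $2\Leftrightarrow 3$ is fine and matches the paper. The substantive content is $1\Leftrightarrow 2$, and there your $1\Rightarrow 2$ argument has a real gap. After tracing out $Q_l'$ one has
\[
\rho_{m_l}^{R_l'M_lE_l}=\sum_{i,j,e,e',o,o'}|i\>\<j|_{R}\,\<j|K_{e',m_l,o'}^{\dag}K_{e,m_l,o}|i\>\;|o\>\<o'|_{M_l}\otimes|e\>\<e'|_{E_l}\,,
\]
and the index $o'$ is \emph{live} on the $M_l$ register: you cannot ``sum over $o'$'' and then invoke~\eqref{eqn:choi_oper_dynamical_KL_condition}. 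That condition is asymmetric in $o,o'$ --- the left factor is the already-summed $|C_{m_l}\rr=\sum_{o'}|C_{m_l,o'}\rr$ --- so it only controls the combination $\<j|K_{e',m_l}^{\dag}K_{e,m_l,o}|i\>$, whereas factorization of $\rho_{m_l}^{R_l'M_lE_l}$ requires the per-$(o,o')$ identity $\<j|K_{e',m_l,o'}^{\dag}K_{e,m_l,o}|i\>\propto\delta_{j,i}$. The paper obtains this not from the \emph{statement} of Theorem~\ref{thm:algebraic_KL_condition} but from the decoder built in its proof: it inserts $\sum_{\tilde e}D_{\tilde e|m_l}^{\dag}D_{\tilde e|m_l}=I$ between $K^{\dag}$ and $K$ and uses the relation $D_{\tilde e|m_l}K_{e,m_l,o}\Pi_{Q_0}=\tilde\lambda_{\tilde e,e,m_l,o}\Pi_{Q_0}$, which holds for each individual $o$. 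That decoder-insertion step is the missing idea in your sketch.

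For $2\Rightarrow 1$ your route is genuinely different from the paper's. You propose to read off matrix elements of the factorization and feed them back into Theorem~\ref{thm:algebraic_KL_condition}; the paper instead Schmidt-decomposes the purification $|\varphi_{m_l}\>$ across the bipartition $Q_l'$ versus $R_l'M_lE_l$, builds an explicit decoder $D_{\alpha|m_l}=V_{m_l,\alpha}\sum_i\ketbra{v_{i,\alpha}^{(m_l)}}$ from the Schmidt vectors, and verifies Definition~\ref{def:dynamical_QECC} directly without returning to the algebraic condition. Your approach is tidy in spirit but hides a subtlety you do not address: reading off matrix elements yields $\<j|K_{e',m_l,o'}^{\dag}K_{e,m_l,o}|i\>=(\rho_{m_l}^{R_l'})_{ji}\cdot\tau_{(o,e),(o',e')}$, and to recover the $\delta_{j,i}$ required by~\eqref{eqn:choi_oper_dynamical_KL_condition} you need $\rho_{m_l}^{R_l'}$ to be maximally mixed on $\mathscr{S}_{Q_0}$, which is not part of hypothesis~(2) as stated.
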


\begin{proof}
    First we use Theorem~\ref{thm:algebraic_KL_condition} to show that eqn.~\eqref{eqn:choi_oper_dynamical_KL_condition} implies $S(\rho_{m_l}^{R_l'M_lE_l}) = S(\rho_{m_l}^{R_l'}) + S(\rho_{m_l}^{M_lE_l})$.
    The merginal density operator in $\mathscr{H}_{R_l'}\otimes\mathscr{H}_{M_l}\otimes\mathscr{H}_{E_l}$ can be expressed as
    \begin{equation}\label{eqn:reduced_RME_density operator}
    \begin{aligned}
        &\rho_{m_l}^{R_l'M_lE_l} := \sum_{e,e',o,o'} \tr_{Q_l'}\big( \rho_{m_l,e,e',o,o'}^{R_l'Q_l'M_lE_l} \big) \\
        &= \sum_{e,e',o,o',i,j} |i\>\<j|_{R_l'} \<j|K_{e',m_l,o'}^\dag K_{e,m_l,o}|i\> \\
        &\quad \otimes |o\>\<o'|_{M_l} \otimes |e\>\<e'|_{E_l} \;.
    \end{aligned}
    \end{equation}

    Now consider a unitary $[u_{\Bar{e},e}]_{\Bar{e},e}$ as defined in the sufficiency of eqn.~\eqref{eqn:choi_oper_dynamical_KL_condition} which performs the transformation $|F_e\rr = \sum_{\Bar{e}} u_{\Bar{e},e} |E_{\Bar{e}}\rr$.
    Applying this to eqn.~\eqref{eqn:reduced_RME_density operator} transforms the Kraus operator $K_{e,m_l,o} \mapsto F_{e,m_l,o}$ and the noise environment basis $|e\> \mapsto |v_e^{(m_l)}\> = \sum_{\Bar{e}} u_{\Bar{e},e} |\Bar{e}\>$.
    Also consider the decoding channel $\mathcal{D}_{m_l}$ with Kraus operators $\{D_{e|m_l}\}_e$ constructed in proof of the sufficiency of eqn.~\eqref{eqn:choi_oper_dynamical_KL_condition} to correct $\mathfrak{E}$.
    As $\sum_e D_{e|m_l}^\dag D_{e|m_l} = I$, we obtain
    \begin{equation}
    \begin{aligned}
        &\rho_{m_l}^{R_l'M_lE_l} \\
        &= \sum_{e,e',o,o',i,j,\Tilde{e}} |i\>\<j|_{R_l'} \<j|F_{e',m_l,o'}^\dag D_{\Tilde{e}|m_l}^\dag D_{\Tilde{e}|m_l} F_{e,m_l,o}|i\> \\
        &\quad \otimes |o\>\<o'|_{M_l} \otimes |v_e^{(m_l)}\>\<v_{e'}^{(m_l)}|_{E_l} \\
        &= \Pi_{Q_0} \otimes \Big( \sum_{e,e',o,o',\Tilde{e}} \Tilde{\lambda}_{\Tilde{e},e',m_l,o'}^* \Tilde{\lambda}_{\Tilde{e},e,m_l,o} |o\>\<o'|_{M_l} \\
        &\quad \otimes |v_e^{(m_l)}\>\<v_{e'}^{(m_l)}|_{E_l} \Big)
    \end{aligned}
    \end{equation}
    since $D_{\Tilde{e}|m_l} F_{e,m_l,o}|i\> = \Tilde{\lambda}_{\Tilde{e},e,m_l,o}|i\>$ for some constant $\Tilde{\lambda}_{\Tilde{e},e,m_l,o}\in\C$.
    Thus $\rho_{m_l}^{R_l'M_lE_l} = \rho_{m_l}^{R_l'} \otimes \rho_{m_l}^{M_lE_l}$, which is equivalent to $S(\rho_{m_l}^{R_l'M_lE_l}) = S(\rho_{m_l}^{R_l'}) + S(\rho_{m_l}^{M_lE_l})$.

    Now we show that $S(\rho_{m_l}^{R_l'M_lE_l}) = S(\rho_{m_l}^{R_l'}) + S(\rho_{m_l}^{M_lE_l})$ implies eqn.~\eqref{eqn:dynamical_QECC} by constructing a decoding channel recovering the initial code state.
    Now we consider a Schmidt decomposition on bipartition between codespace $Q_l'$ and joint system $R_l'M_lE_l$ of $|\varphi_{m_l}\> = \sum_{i,e,o}|i\>_{R_0} (K_{m_l,o,e}|i\>_{Q_0}) |o\>_{M_l} |e\>_{E_l}$ for each $o$, which gives
    \begin{equation}
    \begin{aligned}
        |\varphi_{m_l}\> = \sum_{i,\alpha} \sqrt{q_\alpha^{(m_l)}} |i\>_{R_l'} |u_\alpha^{(m_l)}\>_{M_lE_l} |v_{i,\alpha}^{(m_l)}\>_{Q_l'}
    \end{aligned}
    \end{equation}
    where $\{|u_\alpha^{(m_l)}\>_{M_lE_l}\}_\alpha$ is an eigenvector of $\rho_{m_l}^{M_lE_l}$ with corresponding eigenvalue $q_\alpha^{(m_l)}$ and $\{|v_{i,\alpha}^{(m_l)}\>\}_{i,\alpha}$ is a set of orthonormal vectors in code space $\mathscr{S}_{Q_l'}$. 

    Now consider decoding channel $\mathcal{D}_{m_l}$ with Kraus operators $\{D_{\alpha|m_l}\}_{\alpha} \cup \{\Pi^\perp\}$ defined by 
    \begin{equation}
        D_{\alpha|m_l} = V_{m_l,\alpha} \sum_i\ketbra{v_{i,\alpha}^{(m_l)}}_{Q_l'} \;,
    \end{equation}
    for unitary $V_{m_l,\alpha}:\mathscr{S}_{Q_l'}\rightarrow\mathscr{S}_{Q_0}$ such that $V_{m_l,\alpha}|v_{i,\alpha}^{(m_l)}\>_{Q_l'} = |i\>_{Q_0}$ for all $i$.
    Operator $\Pi^\perp$ is a projector onto subspace $\mathscr{V}^\perp$ orthogonal to $\spn\{|v_{i,\alpha}^{(m_l)}\>\}_\alpha$ to obtain normalization $\Pi^\perp + \sum_\alpha D_{\alpha|m_l}^\dag D_{\alpha|m_l} =I$.
    Hence
    \begin{equation}\label{eqn:decoding_info_theoretic_cond}
    \begin{aligned}
        D_{\alpha|m_l}|\varphi_{m_l}\> &= \sum_i \sqrt{q_{\alpha}^{(m_l)}} |i\>_{R_l'} |u_\alpha^{(m_l)}\>_{M_lE_l} |i\>_{Q_0} \\
        &= |\phi\>_{R_l'Q_l'} \sqrt{q_{\alpha}^{(m_l)}} |u_\alpha^{(m_l)}\>_{M_lE_l}
    \end{aligned}
    \end{equation}
    showing that the initial maximally entangled state is recovered.
    Thus the overall action of decoding channel $\mathcal{D}_{m_l}$ to the density operator $|E_e\rr\ll E_e| \ast \mathbf{I}_{m_l} \ast \ketbra{\psi}_{Q_0}$ of system $Q_l'$ at the start of the decoding round is 
    \begin{equation}
    \begin{aligned}
        &\mathcal{D}_{m_l}(\mathbf{E} \ast \mathbf{I}_{m_l} \ast \ketbra{\psi}_{Q_0}) \\
        &= \sum_{i,j,e} \psi_i\psi_j^* \mathcal{D}_{m_l}(|E_e\rr\ll E_e| \ast \mathbf{I}_{m_l} \ast |i\>\<j|_{Q_0}) \\
        &= \sum_{i,j,e',e,o',o} \psi_i\psi_j^* \mathcal{D}_{m_l}(K_{e,m_l,o} |i\>\<j| K_{e',m_l,o'}^\dag) \<e'|e\>\<o'|o\> \\
        &= \sum_{i,j,\alpha',\alpha} \psi_i\psi_j^* \sqrt{q_\alpha^{(m_l)}q_{\alpha'}^{(m_l)}} \mathcal{D}_{m_l}(|v_{i,\alpha}^{(m_l)}\>\<v_{j,\alpha'}^{(m_l)}|) \\
        &\quad \times \<u_{\alpha'}^{(m_l)}|u_\alpha^{(m_l)}\> \\
        &= \sum_{i,j,\alpha} \psi_i\psi_j^* q_\alpha^{(m_l)} \mathcal{D}_{m_l}(|v_{i,\alpha}^{(m_l)}\>\<v_{j,\alpha}^{(m_l)}|) \\
        &= \sum_{i,j,\alpha} \psi_i\psi_j^* q_{\alpha}^{(m_l)} |i\>\<j|_{Q_0} \\
        &= \lambda_{m_l} \ketbra{\psi}_{Q_0}
    \end{aligned}
    \end{equation}
    where $\lambda_{m_l} = \sum_\alpha q_\alpha^{(m_l)}$.
    To obtain the third equality, we use the change of basis on the joint memory - noise environment system $M_lE_l$ to $|u_\alpha^{(m_l)}\>_{M_lE_l} = \sum_{o,e} \eta_{(o,e),\alpha}^{(m_l)} |o\>_{M_l}|e\>_{E_l}$ for some complex numbers $\{\eta_{(o,e),\alpha}^{(m_l)}\}_{o,e,\alpha}$ which also gives $K_{e,m_l,o}|i\> \mapsto \sqrt{q_\alpha^{(m_l)}}|u_{\alpha,i}^{(m_l)}\> = \sum_{o,e} \eta_{(o,e),\alpha}^{(m_l)} K_{e,m_l,o}|i\>$.
    Whereas the fourth equality is obtained by using eqn.~\eqref{eqn:decoding_info_theoretic_cond}.
    Thus decoding channel $\mathcal{D}_{m_l}$ recovers all initial codestate $|\psi\>_{Q_0}$ for any error sequence $e$.
    
    Lastly, we show that $I_{\rho_{m_l}^{R_l'M_lE_l}}(R_l' : M_lE_l) = 0$ if and only if $S(\rho_{m_l}^{R_l'M_lE_l}) = S(\rho_{m_l}^{R_l'}) + S(\rho_{m_l}^{M_lE_l})$.
    This simply follows from the definition of von Neumann mutual information
    \begin{equation}
    \begin{aligned}
        I_{\rho_{m_l}^{R_l'M_lE_l}}(R_l' : M_lE_l) = S(\rho_{m_l}^{R_l'}) + S(\rho_{m_l}^{M_lE_l}) - S(\rho_{m_l}^{R_l'M_lE_l}) \;,
    \end{aligned}
    \end{equation}
    which is equal to $0$ if and only if $S(\rho_{m_l}^{R_l'M_lE_l}) = S(\rho_{m_l}^{R_l'}) + S(\rho_{m_l}^{M_lE_l})$.
\end{proof}

Statement 2 of Theorem~\ref{thm:info_theoretic_condition} reduces to the information-theoretic necessary and sufficient condition for static QECC in~\cite{schumacher1996quantum,nielsen1998information} stating that the reduced density operator of the reference system and the environment after the error operation $\rho^{R'E'}$ is separable.
Stated equivalently in terms of the von Neumann entropy, $S(\rho^{R'E'}) = S(\rho^{R'}) + S(\rho^{E'})$.
On the other hand, statement 3 of Theorem~\ref{thm:info_theoretic_condition} reduces to the necessary and sufficient condition in~\cite{cerf1997information} for static error correction, stating that it must hold that the mutual information between the noise environment $E'$ and the reference system $R'$ after error operation is $I_{\rho^{R'E'}}(E':R')=0$.
Namely, there is no correlation between $E'$ and $R'$.
In the general QECC case, the condition $S(\rho_{m_l}^{R_l'M_lE_l}) = S(\rho_{m_l}^{R_l'}) + S(\rho_{m_l}^{M_lE_l})$ and $I_{\rho_{m_l}^{R_l'M_lE_l}}(R_l' : M_lE_l) = 0$ indicates that the reference system $R_l'$ and the joint check measurement outcome - noise environment system $M_lE_l$ at the start of the decoding round are uncorrelated for each final memory state $m_l$.
However in general, the check measurement outcome system $M_l$ and the noise environment $E_l$ exhibit some correlation.

\section{Error-adapted Approximate Strategic Code}

So far we have been focusing on strategic codes which recovers logical information \textit{exactly} by showing necessary and sufficient conditions of how to achieve this (Theorem~\ref{thm:algebraic_KL_condition} and Theorem~\ref{thm:info_theoretic_condition}).
However in practice, resource limitations and some knowledge about characteristic of the relevant noise often allow us to relax the requirements on how well logical information should be recovered in exchange for a less resource-intensive code.
These practical considerations gives rise to \textit{approximate} (static) QECCs, which have been known to achieve a performance comparable to generic exact QECC in dealing with a particular error model in a more resource-efficient manner (see e.g.~\cite{leung1997approximate,lidar2013quantum,crepeau2005approximate,fletcher2007optimum,ng2010simple,cafaro2014approximate}).

To address this practical considerations we turn to \textit{approximate strategic code}, namely one where we demand that logical information is recovered only up to a certain fidelity by a decoding channel after the $l$ rounds of operation by the interrogator.
To obtain this approximate code, we propose an optimization problem that given an ensemble of $d'$-dimensional quantum states, an $l$-rounds of error $\mathbf{E}$, and positive integer $d>d'$ and returns: (1) an encoding channel $\mathcal{C}^{(0)}$ mapping bounded linear operators on $\C^{d'}$ to bounded linear operators on subspace $\mathscr{S}_{Q_0}$ of $\C^d$, (2) an $l$-round interrogator $\mathbf{I}$, and (3) set of decoders $\mathbf{D}$ corresponding to each final memory state of interrogator $\mathbf{I}$.
Note that as opposed to the previously considered QECC scenario where we start with the codespace $\mathscr{S}_{Q_0}$, here we start with a channel $\mathcal{C}^{(0)}$ that maps $d'$-dimensional density operators on $\C^{d'}$ to density operators with support on $\mathscr{S}_{Q_0}\subseteq\C^d$.
Also, the decoding channel $\mathcal{D}_{m_l}$ for final memory state $m_l$ maps density operators with support on $\mathscr{S}_{Q_l'}\subseteq\C^d$ (the codespace after the final error map) to density operators on $\C^{d'}$.
Let us denote the space of operators on $\C^{d'}$ at the input of the encoding channel by $\mathscr{H}_L$ and those at the output of the decoding channel by $\mathscr{H}_{L'}$.

We can describe the encoding, interrogator, and decoding as one single quantum comb
\begin{equation}
\begin{aligned}
    \mathbf{Q} = \sum_{m_{0:l}} \mathbf{D}_{m_l} \otimes \mathbf{C}_{m_l|m_{l-1}}^{(l)} \otimes\dots\otimes \mathbf{C}_{m_1}^{(1)} \otimes \mathbf{C}^{(0)} \;.
\end{aligned}
\end{equation}
Operator $\mathbf{Q}$ is positive semidefinite as a consequence of each $\mathbf{D}_{m_l}, \mathbf{C}_{m_l|m_{l-1}}^{(l)} ,\dots, \mathbf{C}_{m_1}^{(1)} , \mathbf{C}^{(0)}$ (Choi operators of CP maps $\mathcal{D}_{m_l},\mathcal{C}_{m_l|m_{l-1}}^{(l)} ,\dots, \mathcal{C}_{m_1}^{(1)} , \mathcal{C}^{(0)}$) being positive semidefinite operator in $\mathscr{H}_{L'}\otimes\mathscr{H}_{Q_l'} \,,\, \mathscr{H}_{Q_l}\otimes\mathscr{H}_{Q_{l-1}'} \,,\dots,\, \mathscr{H}_{Q_1}\otimes\mathscr{H}_{Q_0'} \,,\, \mathscr{H}_{Q_0}\otimes\mathscr{H}_L$, respectively.
The sequence of errors described similarly as $\mathbf{E} = \sum_e |E_e\rr\ll E_e|$, which is also positive semidefinite.

Let $Q = \{L,Q_0,Q_0',\dots,Q_l,Q_l',L'\}$ be the set of labels of the code spaces in the dynamical code and for $\Tilde{Q}\subseteq Q$ denote $(\cdot)^{\top_{\Tilde{Q}}}$ as partial transpose over spaces with labels in $\Tilde{Q}$ and $\tr_{\Tilde{Q}}$ as partial trace over spaces with labels in $\Tilde{Q}$ and $I_{\Tilde{Q}} = \bigotimes_{Q'\in\Tilde{Q}} I_{Q'}$.
Consider a channel $\mathcal{T}:\mathscr{H}_L\rightarrow\mathscr{H}_{L'}$ composed of the sequence of the check instruments $\{\mathcal{C}_{m_l|m_{l-1}}^{(l)},\dots,\mathcal{C}_{m_1}^{(1)},\mathcal{C}^{(0)}\}_m$ for sequence of memory state $m=m_1,\dots,m_{l}$, error maps $\mathcal{E}^{(l)},\dots,\mathcal{E}^{(0)}$ and the final decoding channels $\{\mathcal{D}_{m_l}\}_{m_l}$. 
We use the entanglement fidelity of channel $\mathcal{T}$ on state $\rho$ as our performance metric, which is defined as
\begin{equation}\label{eqn:entanglement_fidelity}
\begin{aligned}
    F(\rho,\mathcal{T})  
    &= \tr\big( \mathbf{E}\ast\mathbf{Q} \, |\rho\rr\ll\rho| \big) \\
    &= \tr\big( (\mathbf{E}^\top \otimes I_{L',L}) \, \mathbf{Q} \, (|\rho\rr\ll\rho|\otimes I_{Q\backslash L,L'}) \big) \;,
\end{aligned}
\end{equation}
where $|\rho\rr = \sum_j \rho|j\>|j\>$ is the vectorized form of density operator $\rho$.

For $r\in\{1,\dots,l\}$, it holds that $\tr_{Q_r}(\sum_{m_r}\mathbf{C}_{m_r|m_{r-1}}^{(r)}) = I_{Q_{r-1}'}$ since $\sum_{m_r} \mathcal{C}_{m_r|m_{r-1}}^{(r)}$ is a CPTP map.
Similarly it also holds that $\tr_{Q_0}(\mathbf{C}^{(0)}) = I_L$ and $\tr_{L'}(\mathbf{D}_{m_l}) = I_{Q_l'}$ for each $m_l$.
Thus for a given error operator $\mathbf{E}$ and initial state $\rho$, we can maximize entanglement fidelity~\eqref{eqn:entanglement_fidelity} over $\mathbf{Q}$ as
\begin{equation}\label{eqn:max_entanglement_fidelity1}
\begin{gathered}
    \max_{\mathbf{Q}} \tr\big( (\mathbf{E}^\top \otimes I_{L',L}) \, \mathbf{Q} \, (|\rho\rr\ll\rho|\otimes I_{Q\backslash L,L'}) \big) \\
    \textup{such that} \\
    \mathbf{Q} = \sum_{m_{0:l}} \mathbf{D}_{m_l} \otimes \mathbf{C}_{m_l|m_{l-1}}^{(l)} \otimes\dots\otimes \mathbf{C}_{m_0}^{(0)} \\
    \mathbf{D}_{m_l} \geq 0 \;,\quad \tr_{L'}(\mathbf{D}_{m_l}) = I_{Q_l'} \\
    \mathbf{C}^{(0)} \geq0 \;,\quad \tr_{Q_0}\big(\mathbf{C}^{(0)}\big) = I_L \\
    \mathbf{C}_{m_r|m_{r-1}}^{(r)} \geq0 \;,\; \tr_{Q_r}\big(\sum_{m_r} \mathbf{C}_{m_r|m_{r-1}}^{(r)}\big) = I_{Q_{r-1}'} \;,\; \forall r\geq1 \;.
\end{gathered}
\end{equation}

We can also impose this normalization condition to operator $\mathbf{Q}$ as follows.
Let $\mathbf{Q}_{m_l} = \sum_{m_{0:l-1}} \mathbf{D}_{m_l} \otimes \mathbf{C}_{m_l|m_{l-1}}^{(l)} \otimes\dots\otimes \mathbf{C}_{m_1}^{(1)} \otimes \mathbf{C}^{(0)}$ (hence $\mathbf{Q}=\sum_{m_l}\mathbf{Q}_{m_l}$) and for $r\in\{0,\dots,l\}$ let $\mathbf{Q}_{m_r}^{(r)} = \sum_{m_{0:r-1}} \mathbf{C}_{m_r|m_{r-1}}^{(r)} \otimes\dots\otimes \mathbf{C}_{m_1}^{(1)} \otimes \mathbf{C}^{(0)}$ (hence $\mathbf{Q}_{m_1}^{(1)}=\mathbf{C}_{m_1}^{(1)} \otimes \mathbf{C}^{(0)}$ and $\mathbf{Q}^{(0)}=\mathbf{C}^{(0)}$).
Thus we can rewrite the conditions in~\eqref{eqn:max_entanglement_fidelity1} in this notation as
\begin{equation} \label{eqn:cone_program}
\begin{gathered}
    \max_{\mathbf{Q}} \tr\big( (\mathbf{E}^\top \otimes I_{L',L}) \, \mathbf{Q} \, (|\rho\rr\ll\rho|\otimes I_{Q\backslash L,L'}) \big) \\
    \textup{such that} \\
    \mathbf{Q}\geq0 \;,\quad \tr_{L'}(\mathbf{Q}) = \sum_{m_l} I_{Q_l'} \otimes \mathbf{Q}_{m_l}^{(l)} \\
    \tr_{Q_r}\Big( \sum_{m_r} \mathbf{Q}_{m_r}^{(r)} \Big) = \sum_{m_{r-1}} I_{Q_{r-1}'} \otimes  \mathbf{Q}_{m_{r-1}|m_{r-2}}^{(r-1)} \;,\,\forall r\geq1 \\
    \mathbf{Q}^{(0)}\geq0 \;,\quad \tr_{Q_0}(\mathbf{Q}^{(0)}) = I_L \\
    \mathbf{Q}_{m_r|m_{r-1}}^{(r)} \geq 0 \;,\quad \forall r\geq1 \;.
\end{gathered}
\end{equation}

The optimization problem in \eqref{eqn:cone_program} is an instance of conic programming~\cite{boyd2004convex,nemirovski2006advances}, where the cone characterized by $\mathbf{Q}$ is a separable cone. The aforementioned conic programming can be solved using see-saw algorithm, where every iteration of the see-saw is a semidefinite program (SDP)~\cite{boyd2004convex}. In the special case of static QECC, our conic program in \eqref{eqn:cone_program} reduces to the  bi-convex optimization structure from Ref.~\cite{fletcher2007optimum} and can be solved using two SDPs running one after another, until convergence within a fixed tolerance is attained.

\section{Discussions}

In this work, we propose a unified framework for quantum error-correcting codes (QECC) called the strategic code.
It encompasses all existing QECCs and all physically plausible QECCs to be discovered, including codes involving operational adaptivity and also considering effects of spatially and temporally (non-Markovian) correlated error models.
The strategic code introduces a device called an interrogator which represents all operations performed by the coder in between encoding and decoding stages.
The interrogator is general, in that it may contain any set of operations performed both spatially or temporally (in sequence) with classical or quantum memory.
Within this framework we show an algebraic (Theorem~\ref{thm:algebraic_KL_condition} and an information-theoretic (Theorem~\ref{thm:info_theoretic_condition}) necessary and sufficient error-correction conditions.
These conditions apply to all known variants of dynamical QECC (and all physically-allowed generalizations) and include the error-correction conditions for static QECC~\cite{knill1997theory,nielsen1998information,cerf1997information} as a special case.
The generality of the results partly owes to the quantum combs formalism, which gives a natural spatio-temporal representation for a QECC, as it has been for many sequential tasks in quantum information and computation.
In this formalism, we also propose an optimization problem that gives an approximate QECC that recovers logical information up to desired level of fidelity for a given error model, which again may exhibit non-Markovian correlations.

As mentioned in the main text, although we focus on the scenario where the interrogator only maintains classical memory, the strategic code also accommodates an interrogator with quantum memory (as discussed in detail in Appendix~\ref{app:interrogator_quantum_memory}).
This leads to many questions including: How does the size (dimension) quantum quantum memory affects error-correction? What are the necessary and sufficient conditions for error correction given a fixed size quantum memory?
Moreover, as this generalization includes the entanglement-assisted QECC (EAQECC)~\cite{brun2006correcting,hsieh2007general,brun2014catalytic} as a special case, one could investigate into relationships between error-correction conditions for EAQECC (e.g.~\cite{grassl2022entropic}) to analogous conditions for strategic code.
Also, another generalization mentioned in the main text to allow encoding of logical information in a susbsystem of the codespace, analogous to subsystem codes (as discussed in detail in Appendix~\ref{app:strategic_subsystem_code}).
As subsystem codes has a different error-correction conditions~\cite{poulin2005stabilizer,kribs2005unified,nielsen2007algebraic} compared to traditional static codes~\cite{knill1997theory,nielsen1998information,cerf1997information}, it is an interesting future work to show necessary and sufficient error-correction condition of subsystem strategic code.
Another interesting future work is to use the concept of quantum combs virtualization~\cite{takagi2024virtual,yuan2024virtual} to the strategic code.
This is essentially a method of approximating some operator $\mathbf{\Phi}$ which involves randomly choosing from a set of ``allowed'' $l$-rounds strategic codes $\{\mathbf{I}^{(k)}\}_k$ followed by a post-processing.
The sampling procedure and post-processing are constructed based on a linear expansion of some operator $\mathbf{\Phi}$ in terms of $\{\mathbf{I}^{(k)}\}_k$.
Here operator $\mathbf{\Phi}$ have the same dimension as $\mathbf{I}^{(k)}$, but it may correspond to a non-physical process, such as those involving indefinite causal order or causally inseparable~\cite{oreshkov2012quantum,chiribella2013quantum,procopio2015experimental,costa2016quantum,milz2018entanglement,rubino2017experimental,goswami2018indefinite,loizeau2020channel,ebler2018enhanced}.
One could also explore how a strategic code equipped with such exotic causal structure performs.
For more detailed discussion on strategic code virtualization and strategic code with more exotic causal structures, see Appendix~\ref{app:virtual_strategic_code}.

Further work could be done on an explicit construction of dynamical QECCs such as a Floquet code, that corrects a sequence of error maps by using the strategic code framework and conditions in Theorem~\ref{thm:algebraic_KL_condition} and Theorem~\ref{thm:info_theoretic_condition}.
It would also be an interesting to explore further whether adaptive strategic code can provide any advantage over codes with fixed sequence of operations.
Such advantage could take the form of larger code distance or capability of storing more logical information.
A notion of approximate strategic code can also be explored further and optimized using our optimization method.
Particularly, one could perhaps show an approximate error-correction condition for strategic code with respect to logical information recovery up to a certain fidelity, analogous to approximate static QECC conditions in~\cite{beny2009general}.
It is also interesting to understand further the relationship between our conditions and the operator algebraic condition in~\cite{gottesman2022opportunities,fu2024error} which is formulated in terms of non-Abelian gauge group defined by the sequence of Clifford gates and Pauli measurements, as well as relationship between the strategic code and other QECC frameworks~\cite{bombin2023unifying,kesselring2024anyon,davydova2023quantum,fu2024error}.

\hfill

\section*{Acknowledgements}

This work is supported by the NRF2021-QEP2-02-P06 from the Singapore Research Foundation, the Singapore Ministry of Education Tier 1 Grant RG77/22 (S), the FQXi R-710-000-146-720
Grant “Are quantum agents more energetically efficient
at making predictions?” from the Foundational Questions Institute, Fetzer Franklin Fund (a donor-advised
fund of Silicon Valley Community Foundation) and A*STAR C230917003.
AT is supported by CQT PhD scholarship. The authors thank Tobias Haug,  Varun Narsimhachar and Yunlong Xiao for interesting discussions.

\bibliographystyle{unsrt}
\bibliography{references}

\onecolumngrid
\appendix

\section{Necessary and sufficient algebraic conditions for static QECC}\label{app:KL_condition}

Knill-Laflamme's necessary and sufficient condition for exact QECC ~\cite{knill1997theory} states that for a given basis $\{|i\>_Q\}_i$ of code space $\mathscr{S}_Q$ and any distinct pair of code space basis $\ket{i}_Q,\ket{j}_Q\in\mathscr{S}_Q$, it holds that
\begin{equation}
\begin{gathered}
    \<i|_QE_a^\dag E_b|i\>_Q = \<j|_QE_a^\dag E_b|j\>_Q = \lambda_{a,b} \\
    \quad \textup{and} \quad \\
    \<i|_QE_a^\dag E_b|j\>_Q = 0 \;,
\end{gathered}
\end{equation}
for some constant $\lambda_{a,b}\in\C$.
Equivalently for a projector $\Pi_Q = \sum_i \ketbra{i}_Q$ onto codespace $\mathscr{S}_Q$, it holds that
\begin{equation}
    \Pi_Q E_a^\dag E_b \Pi_Q = \lambda_{a,b} \Pi_Q \;.
\end{equation}

For subsystem QECC with code space $\mathscr{S}_Q = \mathscr{S}_C \otimes \mathscr{S}_G$, where $\mathscr{S}_C$ is the code subsystem and $\mathscr{S}_G$ is the gauge subsystem, Nielsen-Poulin's necessary and sufficient condition for exact QECC~\cite{nielsen2007algebraic} is
\begin{equation}
    \Pi_Q E_a^\dag E_b \Pi_Q = I_C \otimes g_{a,b}
\end{equation}
where $g_{a,b}$ is an operator on $\mathscr{S}_G$ and $\Pi_Q$ projection onto code space $\mathscr{S}_Q$ defined as $\Pi=VV^\dag$ where $V:\mathscr{S}_L\rightarrow\mathscr{S}_Q$ is an isometry that encodes logical states into code states.

\section{Quantum combs representation of strategic code}\label{app:choi_jamiolkowski_isomorphism_link_product}

Since $\mathcal{E}^{(r)}$ is a CP map and $\sum_{m_r} \mathcal{C}_{m_r|m_{r-1}}^{(r)}$ is a CPTP map their Choi operators $\mathbf{E}^{(r)}$ and $\mathbf{C}_{m_r|m_{r-1}}^{(r)}$ are positive definite.
Hence $\mathbf{E}^{(r)}$ and $\mathbf{C}_{m_r|m_{r-1}}^{(r)}$ admits decomposition
\begin{equation}
\begin{gathered}
    \mathbf{E}^{(r)} = \sum_{e_r} |E_{e_r}\rr\ll E_{e_r}| \\
    \mathbf{C}_{m_r|m_{r-1}}^{(r)} = \sum_{o_r:f(o_r,m_{r-1})=m_r} |C_{o_r|m_{r-1}}\rr\ll C_{o_r|m_{r-1}}|
\end{gathered}
\end{equation}
where $|E_{e_r}\rr = \sum_{i,j} \<i|E_{e_r}|j\> |i\>|j\>$ and $|C_{o_r|m_{r-1}}\rr = \sum_{i,j} \<i|C_{o_r|m_{r-1}}|j\> |i\>|j\>$ are the (unnormalized) eigenvectors of $\mathbf{E}^{(r)}$ and $\sum_{m_r} \mathbf{C}_{m_r|m_{r-1}}^{(r)}$, which are the vectorized canonical Kraus operators of CP maps $\mathcal{E}^{(r)}$ and $\sum_{m_r} \mathcal{C}_{m_r|m_{r-1}}^{(r)}$, respectively.
Hence we can express $\mathbf{E}_e$ (for $e=e_0,\dots,e_l$) and $\mathbf{I}_{m_l}$ as
\begin{equation}
\begin{gathered}
    \mathbf{E}_{e} = |E_{e}\rr \ll E_{e}| \\
    \mathbf{I}_{m_l} = \sum_{o\in O_{m_l}} |C_{m_l,o}\rr\ll C_{m_l,o}| \;.
\end{gathered}
\end{equation}
Here, $o=o_1,\dots,o_l$ is a sequence of check measurement outcomes and $O_{m_l}$ is the set of all check measurement outcome sequence $o$ resulting in final memory state $m_l$, i.e. $o\in O_{m_l}$ if and only if there exists $m_1,\dots,m_{l-1}$ such that $o=o_1,\dots,o_l$ satisfies $f_1(o_1)=m_1,\;  f_2(o_2,m_1)=m_2,\, \dots,\, f_l(o_l,m_{l-1})=m_l$.
The vectors $|E_{e}\rr$ and $|C_{m_l,o}\rr$ are defined by
\begin{equation}\label{eqn:vectorized_kraus_representation_checks_errors}
\begin{aligned}
    |C_{m_l,o}\rr &= |C_{o_l|m_{l-1}}^{(l)}\rr\otimes\dots\otimes |C_{o_1}^{(1)}\rr \\
    |E_{e}\rr &= \sum_{i_{0:l-1},j_{0:l-1}} \Big( \<i_{l-1}|_{E_{l-1}}E_{e_{l-1}}|j_{l-1},i_{l-2}\>_{Q_{l-1}E_{l-2}} \otimes\dots\otimes \<i_1|_{E_1}E_{e_1}|j_1,i_0\>_{Q_1E_0} \otimes \<i_0|_{E_0}E_{e_0}|j_0\>_{Q_0} \Big) \\ 
    &\quad \otimes E_{e_l}|j_l,i_{l-1}\>_{Q_lE_{l-1}} \otimes |j_{0:l}\>_{Q_0\dots Q_l}
\end{aligned}
\end{equation}
where $|j_{0:l}\>_{Q_0\dots Q_l} = \bigotimes_{r=0}^{l-1} |j_r\>_{Q_r}$ and for some orthonormal bases $\{|i_r\>\}_{i_r}$ and $\{|j_r\>\}_{j_r}$ of the noise environment $\mathscr{S}_{E_r}$ and codespace $\mathscr{S}_{Q_r}$, respectively.
Note that $|C_{m_l,o}\rr \in \mathscr{S}_{Q_0'}\otimes\mathscr{S}_{Q_1}\otimes\dots\otimes\mathscr{S}_{Q_{l-1}}\otimes\mathscr{S}_{Q_l}$ and $|E_{e}\rr \in \mathscr{S}_{Q_0}\otimes\mathscr{S}_{Q_0'}\otimes\dots\otimes\mathscr{S}_{Q_l}\otimes\mathscr{S}_{Q_l'}$.
Using these formulas, we can now express complete interaction between the sequence of check instruments $\mathbf{I}_{m_l}$ and error maps $\mathbf{E}_{e}$ as
\begin{equation}
\begin{gathered}
    \mathbf{E}_{e}\ast\mathbf{I}_{m_l} = \sum_{o\in O_{m_l}} |K_{e,m_l,o}\rr\ll K_{e,m_l,o}|
\end{gathered}
\end{equation}
where
\begin{equation}
\begin{aligned}
    &|K_{e,m_l,o}\rr = E_{e_l}(C_{o_l|m_{l-1}}^{(l)}\otimes I_{E_{l-1}}) E_{e_{l-1}} \dots E_{e_1} (C_{o_1}^{(1)}\otimes I_{E_0}) E_{e_0}|j_0\>_{Q_0} \otimes |j_0\>_{Q_0} \\
    &= \sum_{i_{0:l},j_{1:l},k_{1:l}} \Big( \<j_l|_{Q_l}C_{o_l|m_{l-1}}^{(l)}|k_l\>_{Q_{l-1}'} \<k_l,i_{l-1}|_{Q_{l-1}'E_{l-1}} E_{e_{l-1}}|j_{l-1},i_{l-2}\>_{Q_{l-1}E_{l-2}} \dots \<k_2,i_1|_{Q_1'E_1}E_{e_1}|j_1,i_0\>_{Q_1E_0}  \<j_1|_{Q_1}C_{o_1}^{(1)}|k_1\>_{Q_0'} 
    \\ &\quad  \<k_1,i_0|_{Q_0',E_0}E_{e_0}|j_0\>_{Q_0} \Big) E_{e_l}|j_l,i_{l-1}\>_{Q_lE_{l-1}} \otimes |j_0\>_{Q_0} \;,
\end{aligned}
\end{equation}
which is an vector in $\mathscr{S}_{Q_l'}\otimes\mathscr{S}_{Q_0}$.

We give a more explicit derivation of the operator representing the entire interaction between the errors and the check instruments and the decoding procedure.
Consider operators $\mathbf{D}_{m_l} \in \mathscr{H}_D \otimes \mathscr{H}_{Q_l'}$ and $\mathbf{E}_{e} \in \bigotimes_{r=0}^l \mathscr{H}_{Q_r'}\otimes\mathscr{H}_{Q_r}$ and $\mathbf{C}_{m_r|m_{r-1}}^{(r)} \in\mathscr{H}_{Q_r}\otimes\mathscr{H}_{Q_{r-1}'}$.
Let $Q = \{Q_0,Q_0',\dots,Q_l,Q_l',D\}$ be the set of labels of the code spaces in the dynamical code and for $\Tilde{Q}\subseteq Q$ denote $(\cdot)^{\top_{\Tilde{Q}}}$ as partial transpose over spaces with labels in $\Tilde{Q}$ and $\tr_{\Tilde{Q}}$ as partial trace over spaces with labels in $\Tilde{Q}$ and $I_{\Tilde{Q}} = \bigotimes_{Q'\in\Tilde{Q}} I_{Q'}$.
The entire dynamical encoding, error sequence, and decoding can be expressed as
\begin{equation}
\begin{aligned}
    &\sum_{m_{0:l}} \mathbf{D}_{m_l}\ast \mathbf{E}_{e_l} \ast \mathbf{C}_{m_l|m_{l-1}}^{(l)} \ast \dots \ast \mathbf{E}_{e_1} \ast \mathbf{C}_{m_1}^{(1)} \ast \mathbf{E}_{e_0} \\
    &= \sum_{m_{0:l}} \tr_{Q\backslash DQ_0}\Big( (\mathbf{D}_{m_l}\otimes I_{Q\backslash Q_l'}) \, \prod_{r=1}^l (\mathbf{E}_{e_r}^{\top_{Q_r',Q_r}}\otimes I_{Q\backslash Q_r',Q_r}) \, (\mathbf{C}_{m_r|m_{r-1}}^{(r)}\otimes I_{Q\backslash Q_r,Q_{r-1}'}) \, (\mathbf{E}_{e_0}^{\top_{Q_0',Q_0}}\otimes I_{Q\backslash Q_0',Q_0}) \Big) \\
    &= \sum_{m_{0:l}} \tr_{Q\backslash DQ_0}\Big( (\mathbf{D}_{m_l}\otimes I_{Q\backslash Q_l'}) \, (\mathbf{E}_{e}^\top \otimes I_D) \, \big( \prod_{r=1}^l \mathbf{C}_{m_r|m_{r-1}}^{(r)}\otimes I_{Q\backslash Q_r,Q_{r-1}'} \big) \Big) \\
    &= \tr_{Q\backslash DQ_0}\bigg( (\mathbf{E}_{e}^\top \otimes I_D) \, \underbrace{ \Big( \sum_{m_{0:l}} \mathbf{D}_{m_l} \bigotimes_{r=1}^l \mathbf{C}_{m_r|m_{r-1}}^{(r)} \Big) }_{\mathbf{Q}=\sum_{m_l}\mathbf{D}_{m_l}\ast\mathbf{I}_{m_l}} \bigg) \\
    &= \tr_{Q\backslash DQ_0}\Big( (\mathbf{E}_{e}^\top \otimes I_D) \, \mathbf{Q} \Big) \\
    &= \mathbf{E}_{e} \ast \mathbf{Q}
\end{aligned}
\end{equation}
Note that $\mathbf{E}_{e} \ast \mathbf{Q} \in \mathscr{H}_D \otimes \mathscr{H}_{Q_0}$.

\subsection{Strategic code with quantum memory}\label{app:interrogator_quantum_memory}

\begin{figure*}
    \centering
    \includegraphics{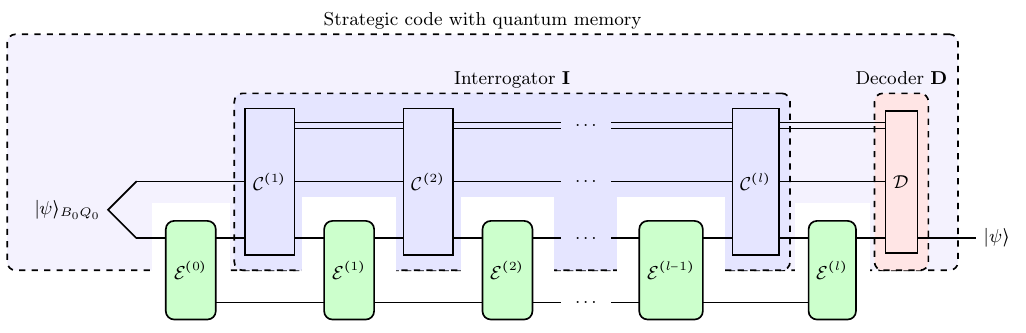}
    \caption{Strategic code with quantum memory.}
    \label{fig:interrogator_quantum_memory}
\end{figure*}

The strategic code in Definition~\ref{def:dynamical_QECC} can be generalized further to the case where retention of some quantum information can be performed between rounds.
In this case, the strategic code is equipped with a quantum memory represented by a sequence of quantum systems $\mathscr{S}_{B_0},\mathscr{S}_{B_1},\dots,\mathscr{S}_{B_l}$ where system $\mathscr{S}_{B_r}$ is the quantum system being passed from check instrument in round $r$ to the check instrument in round $r+1$ for $r\geq1$.
This is illustrated in Fig.~\ref{fig:interrogator_quantum_memory}.
At round $r=0$, without loss of generality we can think of the logical information being initially encoded in an entangled codestate $|\psi\>_{B_0Q_0}$ between the codespace $\mathscr{S}_{Q_0}$ and the quantum memory system $\mathscr{S}_{B_0}$ of the strategic code.
The $\mathscr{S}_{B_0}$ part of the entangled codestate serves as an input to check instrument $\mathscr{C}^{(1)}$ in round $1$.
In this case the check instrument in round $r$ has the form $\mathcal{C}^{(r)}:\mathscr{H}_{B_{r-1}}\otimes\mathscr{H}_{Q_{r-1}'}\rightarrow\mathscr{H}_{B_r}\otimes\mathscr{H}_{Q_r}$ since it receives the quantum system $\mathscr{S}_{B_{r-1}}$ from the preceding check instrument $\mathcal{C}^{(r-1)}$, whereas $\mathcal{C}^{(0)}$ receives the $\mathscr{S}_{B_0}$ part of the initial entangled codestate $|\psi\>_{B_0Q_0}$.

In the quantum combs representation of the interrogator $\mathbf{I}$, the eigenvectors of the interrogator operator $\mathbf{I}_{m_l}$ no longer has the tensor product structure as in the case when only classical memory is allowed in eqn.~\eqref{eqn:vectorized_kraus_representation_checks_errors}.
Namely in general we have $|C_{m_l,o}\rr \neq |C_{o_l|m_{l-1}}^{(l)}\rr\otimes\dots\otimes |C_{o_1}^{(1)}\rr$.
Here $|C_{m_l,o}\rr$ instead takes the more general form of
\begin{equation}\label{eqn:quantum_memory_interrogator_vectorized_kraus}
\begin{aligned}
    |C_{m_l,o}\rr &= \sum_{k_{0:l-1},j_{0:l-1}} C_{o_l,m_{l-1}}|k_{l-1},j_{l-1}\>_{B_{l-1}Q_{l-1}'} \otimes\dots\otimes \<k_2|_{B_2}C_{o_2|m_1}|k_1,j_1\>_{B_1Q_1'} \otimes \<k_1|_{B_1}C_{o_1}|k_0,j_0\>_{B_0Q_0'} \\ 
    &\quad \otimes |k_0\>_{B_0} \otimes |j_{0:l-1}\>_{Q_0'\dots Q_l'}
\end{aligned}
\end{equation}
where $|j_{0:l-1}\>_{Q_0'\dots Q_l'} = \bigotimes_{r=0}^{l-1} |j_r\>_{Q_r'}$ and for some orthonormal bases $\{|k_r\>\}_{k_r}$ and $\{|j_r\>\}_{j_r}$ of the quantum memory system $\mathscr{S}_{B_r}$ and codespace $\mathscr{S}_{Q_r'}$, respectively.
Also as before $m_r = f_r(o_r,m_{r-1})$ for all $r>1$ and $m_1=f_1(o_1)$.

Note that when we set the number of rounds of the strategic code to $l=0$, we recover the entanglement-assisted QECC (EAQECC)~\cite{brun2006correcting,hsieh2007general,brun2014catalytic}.
In this case we simply have the initial entangled codestate $|\psi\>_{B_0Q_0}$ followed by error map $\mathcal{E}^{(0)}$ then a decoder channel $\mathcal{D}$, which makes up an EAQECC.

It is an interesting future work to establish a necessary and sufficient error-correction conditions for a strategic code with quantum memory analogous to Theorem~\ref{thm:algebraic_KL_condition} and Theorem~\ref{thm:info_theoretic_condition}. 
In doing this one needs to restrict the dimension of the quantum memory of the interrogator, as otherwise one can always store the entire code in the quantum memory, bypassing the error maps.
Also, one might also consider where the decoder also outputs a ``residue'' entangled state alongside the recovered initial codestate as in~\cite{grassl2022entropic}.

\subsection{Subsystem strategic code}\label{app:strategic_subsystem_code}

Another generalization of the strategic code in Definition~\ref{def:dynamical_QECC} is to introduce additional system in the codespace in each round, i.e. $\mathscr{S}_{Q_r} = \mathscr{S}_{A_r}\otimes\mathscr{S}_{C_r}$ where logical information is stored in subsystem $\mathscr{S}_{A_r}$.
This is analogous to subsystem QECC~\cite{kribs2005unified,kribs2005operator,poulin2005stabilizer,nielsen2007algebraic} where the codespace is of the form $\mathscr{S}_{Q} = \mathscr{S}_{A}\otimes\mathscr{S}_{C}$.
In this generalization, which we call a \textit{subsystem strategic code}, we still retain the form of the interrogator operator $\mathbf{I}_{m_l} = \sum_{o\in O_{m_l}} |C_{m_l,o}\rr\ll C_{m_l,o}|$.
Namely, $|C_{m_l,o}\rr$ has the same expression as eqn.~\eqref{eqn:vectorized_kraus_representation_checks_errors}, or as eqn.~\eqref{eqn:quantum_memory_interrogator_vectorized_kraus} in the case where quantum memory is available
Hence we can modify Definition~\ref{def:dynamical_QECC} so that we say a subsystem strategic code $(\mathscr{S}_{Q_0},\mathbf{I})$ corrects $\mathfrak{E}$ if
\begin{equation}
    \mathcal{D}_{m_l}(\mathbf{E} \ast \mathbf{I}_{m_l} \ast (\rho\otimes\sigma)) = \rho \otimes \sigma_{m_l}
\end{equation}
for all density operators $\rho$ in $\mathscr{H}_{A_0}$ and $\sigma,\sigma_{m_l}$ operators in $\mathscr{H}_{C_0})$.
Lastly, necessary and sufficient conditions in Theorem~\ref{thm:algebraic_KL_condition} generalized to the subsystem strategic codes should also reduce to the necessary and sufficient condition for subsystem codes~\cite{nielsen2007algebraic} when we set the number of rounds $l=0$,
\begin{equation}
    \Pi_QE_{e'}^\dag E_e\Pi_Q = \Pi_A \otimes g_{e',e}
\end{equation}
where $g_{e',e}$ is an operator in $\mathscr{H}_C$.
This generalized condition for subsystem strategic code, however, is left for future work.

\subsection{Virtual strategic code and strategic codes with exotic causal structure}\label{app:virtual_strategic_code}

A recently proposed virtual quantum resource theory~\cite{yuan2024virtual,takagi2024virtual} offers a framework of approximating a process $\Phi$ by performing sampling process from a set of allowed process $\mathscr{F}$ where $\Phi\notin\mathscr{F}$, followed by a post-processing, to achieve a certain task.
As it is shown in~\cite{takagi2024virtual} on how this framework can be applied to quantum combs, here we show how one can ``virtualize'' a strategic code, including those with more exotic causal structure such as an indefinite causal order or causal inseparability~\cite{oreshkov2012quantum,chiribella2013quantum,procopio2015experimental,oreshkov2016causal,costa2016quantum,milz2018entanglement,rubino2017experimental,goswami2018indefinite,loizeau2020channel,ebler2018enhanced}.

First we can consider a set of $l$-rounds allowed strategic codes $\{\mathbf{I}^{(k)}\}_k$ (e.g. those that only maintains classical memory), where we include an encoding channel $\mathcal{G}^{(k)}:\mathscr{L}(\C^{d'})\rightarrow\mathscr{H}_{Q_0}^{(k)}$ mapping linear operators on $d'$ dimensional complex vector space to linear operators on an initial codespace $\mathscr{S}_{Q_0}^{(k)}$ in the strategic code $\mathbf{I}^{(k)}$.
So a $d'$ dimensional state $|\psi\>$ encoded with strategic code $\mathbf{I}^{(k)}$ with error $\mathbf{E}$ gives a state
\begin{equation}
    \mathbf{E} \ast \mathbf{I}_{m_l}^{(k)} \ast \ketbra{\psi}
\end{equation}
at the start of the decoding round for a final memory state $m_l$.

The sampling and post-processing processes are based on linear expansion of operator $\mathbf{\Phi} = \sum_k \beta_k \mathbf{I}^{(k)}$ where $\beta_k\in\R$ where $\mathbf{\Phi}$ represents some black-box process that allows input of initial state $\ketbra{\psi}$ and interaction with error $\mathbf{E}$ as $\mathbf{I}^{(k)}$ do.
Operator $\mathbf{\Phi}$ have the same dimension as $\mathbf{I}^{(k)}$, but it may not correspond to a quantum comb.
Namely it may correspond to a process involving indefinite causal order or causally inseparable, e.g. where the effect of errors between round $r$ and $r'\neq r$ may not have a definite causal relation (as opposed to the strategic code where interaction between the investigator's operation and the error map at round $r=1$ influences the interaction at round $r=3$, but not the other way around).
Here, operator $\mathbf{\Phi}$ is represented by a process matrix~\cite{oreshkov2012quantum,oreshkov2016causal,costa2016quantum}, which is more general object than a quantum comb.

To perform sampling, one constructs a probability distribution over $k$ with probabilities $\gamma_k = \frac{|\beta_k|}{\tau}$ for $\tau = \sum_k |\beta_k|$ so that
\begin{equation}
    \mathbf{\Phi} = \sum_k (\mathrm{sign}(\beta_k)\tau) \gamma_k \mathbf{I}^{(k)} \;.
\end{equation}
Using this relation, one can sample from distribution $\{\gamma_k\}_k$ and upon obtaining outcome $k$, use strategy code $\mathbf{I}^{(k)}$ to encode some fixed state $|\psi\>$ and after applying noise $\mathbf{E}$ obtain the output state $\mathbf{E} \ast \mathbf{I}_{m_l}^{(k)} \ast \ketbra{\psi}$ then apply ``post-processing'' by a multiplication by $(\mathrm{sign}(\beta_k)\tau)$.
As shown in~\cite{yuan2024virtual,takagi2024virtual}, by performing this sampling and post-processing multiple times we can obtain an approximation of 
\begin{equation}
    \tr\Big( (\mathbf{E} \ast \mathbf{\Phi} \ast \ketbra{\psi}) A \Big)
\end{equation}
for some bounded linear operator $A$.
Lastly, we note that decoder may be included in $\mathbf{I}^{(k)}$ as well to have the entire QECC process where the output $\mathbf{E} \ast \mathbf{I}_{m_l}^{(k)} \ast \ketbra{\psi}$ is the output of a decoder.
It would be interesting to investigate into the performance of strategic code virtualization and how strategic codes with exotic causal structures can or cannot improve code performance.
However, this is left for future work.

\section{Spacetime Code in the Quantum Combs Formalism}\label{app:spacetime_code}

Now we describe the spacetime code~\cite{bacon2017sparse,gottesman2022opportunities,delfosse2023spacetime} in our dynamical QECC quantum combs framework.
The spacetime code is first proposed by Bacon,et.al. for a circuit consisting of a subset of Clifford gates in~\cite{bacon2017sparse}, then a generalization to circuits consisting of any Clifford gates is done by Gottesman in~\cite{gottesman2022opportunities}.
In these two spacetime codes, Pauli measurements for syndrome extraction is performed after the last layer of Clifford gates is applied.
This is later generalized by Delfosse,et.al. in~\cite{delfosse2023spacetime} where Pauli measurements can be performed anywhere in the circuit.
Here we consider the most general spacetime code defined in~\cite{delfosse2023spacetime} in demonstrating how spacetime code fits in our framework.

A spacetime code is defined by a circuit that takes $n$ qubits as input, followed by $l$ layers of Clifford operations, where each layer consists of Clifford gates and Pauli measurements on disjoint qubits.
In Gottesman's and Bacon,et.al.'s spacetime code~\cite{gottesman2022opportunities} where measurements are restricted to the end of the circuit, the circuit takes $q$ qubit input along with preparation of $a$ ancilla qubits at the beginning of the circuit and $q'$ output qubits with $\{|0\>,|1\>\}$ basis measurements on $b$ qubits at the end of the circuit.
In this circuit $q,a,q',b$ must satisfy $n=q+a=q'+b$, where $n$ is the width of the circuit and thus each layer consists only of Clifford gates.

We note that our dynamical QECC framework can also describe a \textit{sequence} of such circuits $\mathbf{C}^{(1)},\mathbf{C}^{(2)},\dots,\mathbf{C}^{(l)}$ where error syndromes from one circuit determines the structure of the subsequent circuit, hence induces adaptivity.
This temporal dependence across circuits has been mentioned in~\cite{gottesman2022opportunities} although was not explored further.
The quantum combs formalism for dynamical QECC applied to spacetime code presented here allows such exploration with the natural temporal-dependence representation.

\begin{figure*}
    \centering
    \includegraphics{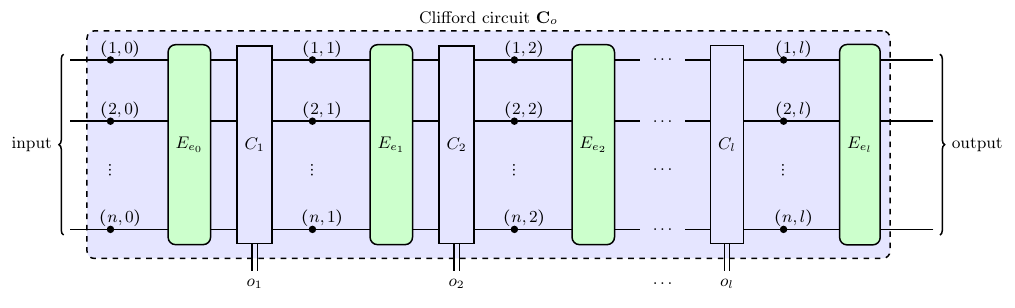}
    \caption{
    Spacetime code Clifford circuit.
    Circuit $\mathbf{C}_o$ takes $n$ qubits as input at the start of the circuit and outputs $n$ qubits along with measurement outcomes $o=o_1,\dots,o_l$ peformed throughout the circuit.
    Between the input and the output, the circuit contains $l$ layers of Clifford operations $C_1,\dots,C_l$, each layer $C_r$ consist of Clifford gates and Pauli measurements with acting on disjoint subsets of the $n$ qubits.
    Outcomes from Pauli measurements at layer $r$ is denoted by $o_r$.
    Error operations are modelled to occur after the input ($E_{e_0}$) and after each layer ($E_{e_1},\dots,E_{e_l}$).
    Error $E_{e_0}$ represents the noise on input qubits and error $E_{e_r}$ for $r\geq1$ represent noise from the Clifford gates and Pauli measurements in layer $C_r$.
    A spacetime code for circuit $\mathbf{C}$ is defined by $n(l+1)$-qubit stabilizer group, which in turn is defined by the circuit components.
    Each qubit (illustrated as black dots) is labeled by $(i,r)$: for $r\geq1$ it correspond to the $i$-th qubit output of $C_r$ and for $r=0$ it correspond to the $i$-th qubit of the input.
    Error $E_{e_r}$ in round $r$ is applied to qubits $\{(i,r)\}_i$.
    }
    \label{fig:spacetime_circuit}
\end{figure*}

\subsection{Dynamical QECC quantum combs representation of spacetime code}

Now we describe the quantum combs dynamical QECC of a spacetime code with respect to a circuit with $l$ layers (see Fig.~\ref{fig:spacetime_circuit}).
In our dynamical QECC framework, this spacetime code has $l$ rounds. 
At round $r=0$, error $E_{e_0}$ is inflicted at the $n$ qubit input.
At round $r\geq1$, Clifford operation $C_r$ is applied to the $n$ qubits, followed by error $E_{e_r}$.
The Clifford operation consisting of Clifford gates and Pauli measurements performed on disjoint set of qubits is described by $C_r = \{C_{r,o_r}\}_{o_r\in O_r}$ where $C_{r,o_r}$ is a bounded linear operator from $\C^{2^n}$ to $\C^{2^n}$ and $o_r\in O_r$ is a measurement outcome from the Pauli measurements with $O_r$ being the set of all possible measurement outcomes.
If there are $k$ Pauli measurements in $C_r$ then the outcome is in the form of a $k$-tuple $o_r=(o_{r_1},\dots,o_{r_k})$. 
If no Pauli measurement is performed at round $r$, then we set a constant outcome for this round, i.e. $O_r=\{o_r\}$ is a singleton set and $C_{r,o_r}$ is a Clifford unitary.
Clifford operation $C_r$ at each round then defines a quantum instrument $\mathcal{C}_r = \{\mathcal{C}_{r,o_r}\}_{o_r}$ where $\mathcal{C}_{r,o_r}(\rho) = C_{r,o_r} \rho C_{r,o_r}^\dag$ is a CP map such that $\sum_{o_r} \mathcal{C}_{r,o_r}$ is a CPTP map.
When measurement outcome $o=o_1,\dots,o_l$ is obtained, the quantum combs representation of the circuit is
\begin{equation}\label{eqn:spacetime_circuit_combs}
\begin{aligned}
    \mathbf{C}_o = |C_o\rr\ll C_o| = \bigotimes_{r=1}^l |C_{r,o_r}\rr\ll C_{r,o_r}|
\end{aligned}
\end{equation}
where $|C_{r,o_r}\rr$ is the vectorized form of operator $C_{r,o_r}$.

Errors occurring throughout the circuit is described by a sequence of bounded linear operators $E_{e_0},\dots,E_{e_l}$ mapping vectors in $\C^{2^n}$ to $\C^{2^n}$.
An error operator $E_{e_r}$ takes the form of a tensor product of Paulis on qubits labeled by $\{(i,r)\}_{i\in[n]}$.
Explicitly this can be expressed as $E_{e_r}=E_{1,e_r} \otimes \dots\otimes E_{n,e_r}$ where $E_{i,e_r}\in \{I,X,Y,Z\}$ is a qubit Pauli operator where identity $E_{i,e_r}=I$ indicates no error is inflicted on qubit $i$ at round $r$.
In the quantum combs representation, we can express the error sequence $e=e_0,\dots,e_l$ as a positive semidefinite operator $|E_e\rr\ll E_e| = \bigotimes_{r=0}^l |E_{e_r}\rr\ll E_{e_r}|$ where $|E_{e_r}\rr = \bigotimes_{i=1}^n |E_{i,e_r}\rr$ and $|E_{i,e_r}\rr = \sum_{j',j} \<j'|E_{i,e_r}|j\> |j'\>_{r,i} |j\>_{r,i}$.
Hence the combs representation of the an error sequence can be expressed conveniently as a tensor product of Choi operator of error operators for each coordinate $(i,r)$ in the spacetime grid
\begin{equation}
\begin{aligned}
    |E_e\rr\ll E_e| = \bigotimes_{r=0}^l \bigotimes_{i=1}^n |E_{i,e_r}\rr\ll E_{i.e_r}| \;,
\end{aligned}
\end{equation}
which closely resembles how errors are represented in spacetime code as a tensor product of the error operators $F_e = \bigotimes_{r=0}^l \bigotimes_{i=1}^n E_{i,e_r}$.
Using both the quantum combs representation of the circuit (eqn.\eqref{eqn:spacetime_circuit_combs}) and the error sequence, the interaction between then can be expressed using the link product as
\begin{equation}
\begin{aligned}
    |E_e\rr\ll E_e| \ast \mathbf{C}_o
\end{aligned}
\end{equation}
which is a positive semidefinite operator from $\C^{2^{2n}}$ to $\C^{2^{2n}}$ corresponding to a CP map from the $n$-qubit input to the $n$-qubit output.

Spacetime code is then defined by $n(l+1)$ qubit stabilizer group $\mathscr{S}_\mathrm{st}$ where the qubits are placed in a grid and labeled by a tuple $(i,r)$ where $i\in\{1,\dots,n\}$ corresponds to a qubit register and $r\in \{0,\dots,l\}$ corresponds to a layer in the circuit.
So, qubits labeled by $\{(i,0)\}_i$ are input qubits which error $E_{e_0}$ are inflicted upon.
Whereas for $r\geq1$, qubits $\{(i,r)\}_i$ are the qubits placed at the output of Clifford operation $C_r$ which error $E_{e_r}$ occurs.
If there are $s$ Pauli measurements across the circuit with observables $S_1,\dots,S_s$, then we can write the collection of the outcomes as a bit string $o=o_1,\dots,o_s$ which is then being put through a function $f(o)=m$ which maps measurement outcomes to error syndromes.
Error syndrome $m$ is then used to choose which of the decoding channel $\{\mathcal{D}_{m}\}_m$ should be used at the circuit output.

In Delfosse,et.al.'s spacetime code~\cite{delfosse2023spacetime} decoding is done in three steps.
The first step is by using check bitstrings $\{u_1,\dots,u_q\}\subseteq\{0,1\}^s$ to define the function $f$ mapping the measurement outcome bitstring $o$ to syndrome bitstring $m=m_1\dots m_q\in\{0,1\}^q$.
This is done by taking the inner product between $u_j$ and $o$ as the $j$-th bit of $f(o)=m$, i.e. $m_j = \langle u_j, o\rangle = u_{j,1}o_1 + \dots + u_{j,s}o_s$, which is to be understood as the $j$-th syndrome of measurement outcome $o$.
The second step is to identify the circuit error $F_e = \bigotimes_{r=0}^l \bigotimes_{i=1}^n E_{i,e_r}$ (or $|E_e\rr\ll E_e|$ in the quantum combs form) using syndrome $m$.
In~\cite{delfosse2023spacetime}, this is done by using a ``most likely effect'' (MLE) decoder on the $nl$-qubit spacetime stabilizer code.
This stabilizer code is defined by circuit $\{\mathbf{C}_o\}_o$ and check bitstrings $\{u_1,\dots,u_q\}$ (or equivalently, function $f$).
Given syndrome $m$, the MLE decoder outputs an $nl$-qubit Pauli $g(m)$ which is its guess of the true circuit error $F_e$.
Now the third step is to propagate $g(m)$, denoted by $\overrightarrow{g(m)}$, and take its $n$-qubit Pauli corresponding to the last layer $l$ of the circuit, denoted by $[\overrightarrow{g(m)}]_l$.
Propagation of (a subset of) an $n(l+1)$-qubit Pauli $P$ placed on the grid of the circuit is introduced as the spackle operation in~\cite{bacon2017sparse}.
Then we compute commutation relation between $\overrightarrow{g(m)}$ and observables $S_1,\dots,S_s$ corresponding to each measurement, to obtain bitstring $\gamma=\gamma_1,\dots,\gamma_s$ where $\gamma_j = [S_j,\overrightarrow{g(m)}]$.
Then we can unflip measurement outcome string $o$ by $o+\gamma$.
Using $[\overrightarrow{g(m)}]_l$ and $\gamma$ we can correct the output state from the circuit.

\subsection{Spacetime code generalization in the strategic code framework}

In the strategic code framework a generalization of the spacetime code can be constructed using adaptivity on the Clifford operations.
Following the notation of the strategic code, we can denote the set of allowed Clifford operations in round $r$ as $\{\mathcal{C}_{m_{r-1}}^{(r)}\}_{m_{r-1}}$, where $\mathcal{C}_{m_{r-1}}^{(r)}$ is a CP map defined by its action $\mathcal{C}_{m_{r-1}}^{(r)}(\rho) = \sum_{o_r} C_{o_r|m_{r-1}}^{(r)} \rho C_{o_r|m_{r-1}}^{(r)\dag}$.
As before, $C_{o_r|m_{r-1}}^{(r)}$ consists of Clifford gates and Pauli measurements on disjoint set of qubits.
For simplicity, we assume that $m_r$ has a one-to-one correspondence with the sequence of measurement outcomes $o_1,\dots,o_r$ hence we can still describe the circuit as $\mathbf{C}=\{\mathbf{C}_o\}_o$ as a sequence of outcomes $o=o_1,\dots,o_l$ is maintained in the classical memory of the code until the decoding round.
For example, when a Pauli measurement $S_1$ in round $1$ gives an outcome $o_1=+1$, we apply Clifford operation $\mathcal{C}_{+1}^{(2)}(\cdot) = \sum_{o_2} C_{o_2|+1}^{(2)} \cdot C_{o_2|+1}^{(2)\dag}$ in round $2$, otherwise we apply Clifford operation $\mathcal{C}_{-1}^{(2)}(\cdot) = \sum_{o_2} C_{o_2|-1}^{(2)} \cdot C_{o_2|-1}^{(2)\dag}$ where Clifford gates and Pauli measurements in $C_{o_2|+1}^{(2)}$ and $C_{o_2|-1}^{(2)}$ may differ.

In this case we have a family of spacetime codes, one for each trajectory defined by a sequence of measurement outcomes $o$.
Using the same method in obtaining the stabilizer group of the spacetime code (i.e. from the backpropagated measurement observables), we obtain a stabilizer group for each outcome sequence $o$.
A simple case where this may happen can be seen using the example mentioned at the end of the previous paragraph.
Suppose $C_{+1|+1}^{(2)} = U_2\otimes \ketbra{0}$ and $C_{+1|-1}^{(2)} = U_2 \otimes \ketbra{+}$, then their corresponding measurement observables are $Z$ and $X$.

\section{Hastings-Haah honeycomb Floquet code as a strategic code}\label{app:hastings_haah_interrogator}

\begin{figure*}
    \centering
    \includegraphics{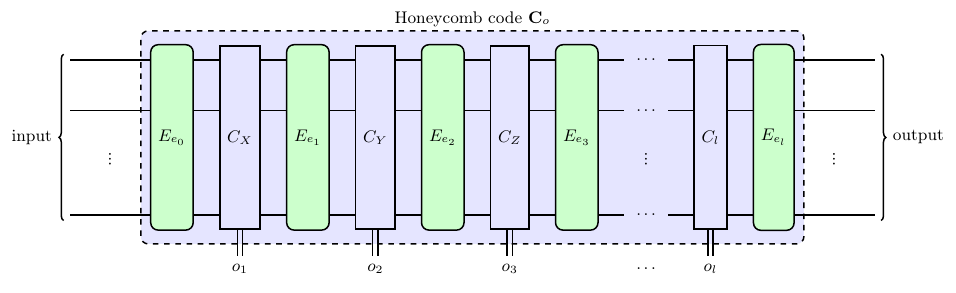}
    \caption{
    Hastings-Haah honeycomb code interrogator.
    }
    \label{fig:hastings_haah_interrogator}
\end{figure*}

Here we look into how the strategic code framework can be used in determining correctable error in the Hastings-Haah honeycomb Floquet code.
For simplicity, we consider a single hexagon in the honeycomb code with corresponding stabilizer $ZZZZZZ$.
Measurement observables in round 1 and 2 are
\begin{equation}
\begin{aligned}
    XXIIII, IIXXII, IIIIXX \\
    IYYIII, IIIYYI, YIIIIY \;,
\end{aligned}
\end{equation}
respectively, and the outcomes are $o_1,o_2\in\{+1,-1\}^3$, where $\{+1,-1\}$ are the corresponding eigenvalues of the two eigenstates for each observable.
Hence the check instruments $\mathcal{C}_X^{(1)}$ and $\mathcal{C}_Y^{(2)}$ at these two rounds have Kraus operators
\begin{equation}
\begin{gathered}
    C_{+1,+1,+1|X}^{(1)} = \big(\ketbra{++}+\ketbra{--}\big)_{1,2} \otimes \big(\ketbra{++}+\ketbra{--}\big)_{3,4} \otimes \big(\ketbra{++}+\ketbra{--}\big)_{5,6} \\
    \vdots \\
    C_{-1,-1,-1|X}^{(1)} = \big(\ketbra{+-}+\ketbra{-+}\big)_{1,2} \otimes \big(\ketbra{+-}+\ketbra{-+}\big)_{3,4} \otimes \big(\ketbra{+-}+\ketbra{-+}\big)_{5,6} \\
    C_{+1,+1,+1|Y}^{(2)} = \big(\ketbra{+i,+i}+\ketbra{-i,-i}\big)_{2,3} \otimes \big(\ketbra{+i,+i}+\ketbra{-i,-i}\big)_{4,5} \otimes \big(\ketbra{+i,+i}+\ketbra{-i,-i}\big)_{6,1} \\
    \vdots \\
    C_{-1,-1,-1|Y}^{(2)} = \big(\ketbra{+i,-i}+\ketbra{-i,+i}\big)_{2,3} \otimes \big(\ketbra{+i,-i}+\ketbra{-i,+i}\big)_{4,5} \otimes \big(\ketbra{+i,-i}+\ketbra{-i,+i}\big)_{6,1}
\end{gathered}
\end{equation}
where subscripts indicates which qubits the projector is acting on and $|\pm\pm\> = |\pm\>\otimes|\pm\>$ and $|\pm i,\pm i\> = |\pm i\> \otimes |\pm i\>$ and $|\pm\>,|\pm i\>$ are $\pm 1$ eigenstates of Pauli $X$ and $Y$, respectively.

Consider two errors $E_{e_0}=ZIIIII$ and $E_{e_0'}=IZIIII$ at round $0$ (recall that round $r$ error occurs after round $r$ operation and before round $r+1$ operation), and no further errors occur in round 1 and 2 i.e. $E_{e_1}=E_{e_2}=IIIIII$.
Both of these errors are correctable by the honeycomb code as both errors $E_{e_0}=ZIIIII$ and $E_{e_0'}=IZIIII$ flips the outcome of both round 1 and round 2 measurements, allowing the decoder to detect the error.
In the strategic code framework, we have a pair of error sequences $e=e_0,e_1,e_2$ and $e'=e_0',e_1,e_2$ with corresponding vectorized error operators
\begin{equation}
\begin{gathered}
    |E_e\rr = |ZIIIII\rr \otimes |IIIIII\rr \otimes |IIIIII\rr \\
    |E_e'\rr = |IZIIII\rr \otimes |IIIIII\rr \otimes |IIIIII\rr \;.
\end{gathered}
\end{equation}
Consider orthogonal states $|j\>,|k\>$ 
in the initial codespace $\mathscr{S}_{Q_0}$.
Then since all check outcomes are stored in the memory of the honeycomb code interrogator, by Corollary~\ref{cor:algebraic_condition_all_outcome_memory} it holds that
\begin{equation}
\begin{aligned}
    \ll E_{e'}|(|C_o\rr\ll C_o| \otimes |k\>\<j|)|E_e\rr = \delta_{k,j} \lambda_{e,e',o}
\end{aligned}
\end{equation}
for some constant $\lambda_{e,e',o}$.
Since the $Z$ pauli flips the $|\pm\>$ to $|\mp\>$ and $|\pm i\>$ to $|\mp i\>$, we obtain outcomes $o_1 = (-1,+1,+1)$ and $o_2=(+1,+1,-1)$ for $E_e$ and $o_1 = (-1,+1,+1)$ and $o_2=(-1,+1,+1)$ for $E_{e'}$ (since without error both initial codestate $|j\>$ and $|k\>$ gives all $+$ outcomes for both $o_1$ and $o_2$)

We can verify that this condition holds for error sequences $e,e'$ as
\begin{equation}
\begin{aligned}
    \ll E_{e'}|(|C_o\rr\ll C_o| \otimes |k\>\<j|)|E_e\rr &= \<k|E_{e_0'}C_{o_1}^{(1)\dag}C_{o_2}^{(2)\dag}C_{o_2}^{(2)}C_{o_1}^{(1)}E_{e_0}|j\> = 0 
\end{aligned}
\end{equation}
for all $o=o_1,o_2$.
One can see this by noting that $C_{o_2}^{(2)\dag}C_{o_2}^{(2)}C_{o_1}^{(1)}E_{e_0}|j\>=0$ for all $o_1 \neq (-1,+1,+1)$ and $o_2\neq(+1,+1,-1)$ while $\<k|E_{e_0'}C_{o_1}^{(1)\dag}C_{o_2}^{(2)\dag}=0$ for all $o_1 \neq (-1,+1,+1)$ and $o_2\neq(-1,+1,+1)$.
Namely, the sequence of check measurement outcomes maps $E_{e_0}|j\>$ and $E_{e_0'}|k\>$ to different subspaces (even with $j=k$), which allows one to construct a decoder detecting and distinguishing these errors.

\end{document}